\newcommand{\ie}{\mbox{i.e.,}\xspace}
\newcommand{\eg}{\mbox{e.g.,}\xspace}
\newcommand{\wrt}{\mbox{w.r.t.}\xspace}
\newcommand{\Wlog}{\mbox{w.l.o.g.}\xspace}
\newcommand{\etal}{\mbox{et al.}\xspace}
\newcommand{\cf}{\mbox{cf.}\xspace}
\newcommand{\angles}[1]{\ensuremath{\langle#1\rangle}}
\newcommand{\timbrcheck}{\textsc{TimBrCheck}\xspace}
\newcommand{\uppaal}{\textsc{Uppaal}\xspace}
\newcommand{\scalefactor}{.788}
\DeclareMathOperator{\histories}{histories}
\DeclareMathOperator{\join}{join}
\pgfplotsset{compat=1.14}
\title{Checking Timed Bisimulation with Bounded Zone-History Graphs -- Technical Report}
\author{Lars Luthmann\thanks{This work was funded by the Hessian LOEWE initiative within the Software-Factory 4.0 project.}
\institute{Real-Time Systems Lab\\TU Darmstadt, Germany}
\email{lars.luthmann@es.tu-darmstadt.de}
\and
Hendrik Göttmann
\institute{Real-Time Systems Lab\\TU Darmstadt, Germany}
\email{hendrik.goettmann@es.tu-darmstadt.de}
\and
Isabelle Bacher
\institute{Real-Time Systems Lab\\TU Darmstadt, Germany}
\email{isabelle.bacher@stud.tu-darmstadt.de}
\and
Malte Lochau$^{*}$
\institute{Model-based Engineering Group\\University of Siegen, Germany}
\email{malte.lochau@es.tu-darmstadt.de}
}
\tikzset{
	every node/.style={},
	descr/.style={fill=white, inner sep=2pt},
	font=\footnotesize,
	initial text={},
	every initial by arrow/.style={*->,>=stealth'},
	labeledroundedstate/.style={rounded corners,draw,inner sep=.2em,align=center},
	labeledroundedstateempty/.style={rounded corners,draw,inner sep=.2em,align=center,transparent},
	labeledroundedstateTA/.style={rounded corners,draw,text width=4.6em,align=center},
	labeledstate/.style={inner sep=.2em,align=center,align=center},
	labeledstateempty/.style={inner sep=.2em,align=center,transparent},
	transition/.style={->,>=stealth'},
}
\renewcommand{\xrightarrow}[1]{
	\mathrel{
		\!\!
		\tikz[baseline=-\the\dimexpr\fontdimen22\textfont2\relax]{
			\node[anchor=south,font=\scriptsize, inner ysep=1.5pt,outer xsep=2.2pt](x){\ensuremath{\!#1}};
			\draw[shorten <=3.4pt,shorten >=3.4pt,->](x.south west)--(x.south east);
		}
		\!\!
	}
}
\newcommand{\xtwoheadrightarrow}[1]{
	\mathrel{
		\!\!
		\tikz[baseline=-\the\dimexpr\fontdimen22\textfont2\relax]{
			\node[anchor=south,font=\scriptsize, inner ysep=1.5pt,outer xsep=2.2pt](x){\ensuremath{\!#1}};
			\draw[shorten <=3.4pt,shorten >=3.4pt,->>](x.south west)--(x.south east);
		}
		\!\!
	}
}
\newcommand{\xTwoheadrightarrow}[1]{
	\mathrel{
		\begin{tikzpicture}[baseline= {( $ (current bounding box.south) + (0,-0.5ex) $ )}]
  			\node[inner sep=.5ex] (a) {$\!\scriptstyle #1\,$};
  			\path[draw,implies-,double distance between line centers=2pt] (a.south east) -- (a.south west);
  			\path[draw,implies-,double distance between line centers=2pt] ($ (a.south east) + (-0.079,0) $)--($ (a.south east) + (-0.08,0) $);
		\end{tikzpicture}
	}
}
\newcommand{\xrightsquigarrow}[1]{
	\mathrel{
		\begin{tikzpicture}[baseline= {( $ (current bounding box.south) + (0,-0.5ex) $ )}]
			\node[inner sep=.5ex] (a) {\ensuremath{\scriptstyle#1}};
			\path[draw,<-,decorate,decoration={zigzag,amplitude=.9pt,segment length=1.2mm,pre=lineto,pre length=3pt}] (a.south east) -- (a.south west);
		\end{tikzpicture}
	}
}
\newcommand{\Twoheadrightarrow}{\xTwoheadrightarrow{\hspace{.45em}}}
\newtheorem{definition}{Definition}
\newtheorem{theorem}{Theorem}
\newtheorem{lemma}{Lemma}
\newtheorem{proposition}{Propostion}
\newtheorem{example}{Example}
\begin{document}

\maketitle


\begin{abstract}
Timed automata (TA) are a well-established
formalism for specifying discrete-state/continuous-time behavior of
time-critical reactive systems.
Concerning the fundamental analysis problem of comparing a candidate implementation 
against a specification, both given as TA,
it has been shown that timed trace equivalence is undecidable,
whereas timed bisimulation equivalence is decidable.
The corresponding proof utilizes region graphs, a
finite, but generally very space-consuming characterization of TA semantics.
Hence, most practical TA tools utilize zone graphs instead, 
a symbolic and generally more efficient representation of TA semantics,
to automate analysis tasks.
However, zone graphs only produce sound results for 
analysis tasks being reducible to plain reachability problems thus
being too imprecise for checking timed bisimilarity.
In this paper, we propose bounded zone-history graphs, 
a novel characterization of TA semantics facilitating an adjustable trade-off 
between precision and scalability of timed-bisimilarity checking.
Our tool \timbrcheck{} is, to the best of our knowledge, the only currently available tool for 
effectively checking timed bisimilarity and even supports non-deterministic TA with silent moves.
We further present experimental results gained from applying our tool to
a collection of community benchmarks, providing insights into trade-offs 
between precision and efficiency, depending on the bound value.
\end{abstract}
%
\section{Introduction}\label{section:introduction}

\paragraph{Background and Motivation.}
\emph{Timed automata (TA)} are frequently 
used to specify discrete-state/\allowbreak{}continuous-time behavior of
time-critical reactive (software) systems~\cite{Alur1990,Bengtsson2003}.
TA therefore extend labeled state-transition graphs of classical automata models 
by a set $C$ of \emph{clocks} constituting constantly and 
synchronously increasing, yet independently resettable 
numerical read-only variables.
Clock values are referenced within \emph{clock constraints} in order
to specify boundaries for time intervals to be satisfied 
by occurrences of actions in valid runs of a TA model.

A fundamental analysis problem arises from 
the comparison of a candidate implementation against a specification, both given as TA.
It has been shown that \emph{timed trace inclusion} is undecidable,
whereas \emph{timed (bi-)simulation} is decidable thus making timed bisimilarity a particularly
useful equivalence notion for verifying time-critical behaviors~\cite{Cerans1992,Waez2013}.
The original proof is based on \emph{region graphs}, 
a finite, but generally very space-consuming representation of TA semantics
(\ie{} having $\mathcal{O}(|C|!\cdot k^{|C|})$ many regions, where
$k$ is the maximum constant occurring in a clock constraint).
Instead, most recent practical TA analysis tools use \emph{zone graphs}, constituting
a symbolic and, on average, more efficient representation of TA semantics
as compared to region graphs.
However, zone graphs only produce sound results for 
analysis tasks being reducible to plain (location-)reachability problems thus
being too imprecise for checking timed bisimilarity~\cite{Weise1997}.

\paragraph{Conceptual Contributions.}
In this paper, we propose a novel characterization 
of TA semantics, called \emph{bounded zone-history graphs}.
\emph{Zone histories} enrich plain zone graphs exactly by the additional
information required for sound timed bisimilarity-checking, yet still yielding
a finite representation of TA semantics.
However, in order to control the size of
bounded zone-history graphs in case of larger input models, our 
approach further incorporates a \emph{bound parameter} $b$ to restrict
the length of histories. 
This bound parameter thus facilitates an adjustable trade-off 
between precision and scalability of timed-bisimilarity checking.
Our technique further handles \emph{non-deterministic} TA and 
supports \emph{weak} and \emph{strong} bisimilarity 
of non-deterministic timed (safety) automata with silent $\tau$-moves.

\paragraph{Tool Support and Reproducibility.}
Our tool \timbrcheck{} supports the \uppaal{}
file format for input models 
and is available on our complementary web page\footnote{\url{https://www.es.tu-darmstadt.de/timbrcheck/}}.
This web page also contains all experimental data and further information
for reproducing the evaluation results.
Additionally, we provide a rich collection of test cases 
(\ie{} pairs of input models) constituting
particularly sophisticated 
TA fragments which we used to exhaustively test our tool implementation.

\paragraph{Experimental Evaluation.}
Our experimental results gained from applying \timbrcheck{} to
a collection of community benchmarks~\cite{Alur1993,Lindahl2001,Jensen1996,Collomb2001,Havelund1997} provide insights into trade-offs 
between precision and efficiency of checking timed bisimilarity using bounded zone-history graphs.
In particular, our results indicate, that a value of 3 for bound parameter $b$ appears to be a reasonable trade-off 
between precision and scalability for the subject systems under consideration.
Moreover, as expected, checking TA with non-deterministic behavior requires considerably more computational 
effort than deterministic cases.

\paragraph{Related Work.}
The notion of timed bisimulation goes back to the works of
Moller and Tofts~\cite{Moller1990} as well as Yi~\cite{Yi1990}
both originally defined on real-time extensions of the process algebra CCS.
Similarly, Nicollin and Sifakis~\cite{Nicollin1994} define 
timed bisimulation on ATP (Algebra of Timed Processes).
However, none of these works initially incorporated a technique for
effectively checking bisimilarity.
The pioneering work of \v{C}er{\=a}ns~\cite{Cerans1992}
includes the first decidability proof of timed bisimulation on 
TA, by providing a finite characterization of bisimilarity-checking 
on a finite representation of TA semantics, called region graphs.
The improved (\ie{} less space-consuming) 
approach of Weise and Lenzkes~\cite{Weise1997} 
employs a variation of zone graphs, called FBS graphs, which also builds the basis
for our notion of zone-history graphs.
Guha \etal{}~\cite{Guha2012prebisimulation,Guha2013}
also follow a zone-based approach for bisimilarity-checking on TA as well as
the weaker notion of timed prebisimilarity, by employing so-called zone-valuation 
graphs and the notion of spans as also used in our approach.
Moreover, Tanimoto \etal{}~\cite{Tanimoto2004} employ 
timed bisimulation to check if a given behavioral 
abstraction preserves time-critical system behavior.

Nevertheless, all these approaches neither facilitate an adjustable trade-off between precision and scalability for checking timed bisimilarity nor
provide any practical tool support.
The only \emph{currently available} tool for checking timed bisimilarity we are aware of 
is called \textsc{Caal}~\cite{Andersen2015caal} which is, however, inherently incomplete
as it does not utilize a finite representation of TA semantics.
%
\section{Preliminaries}\label{section:background}

In this section, we introduce the notational foundations of timed
automata and timed bisimulation.

\subsection{Timed Automata}\label{subsection:ta}

\paragraph{Syntax.}
A \emph{timed automaton (TA)} consists of finite state-transition
graph whose states are called \emph{locations} 
(including a distinguished \emph{initial location}) and whose 
edges, denoting transitions between locations, are called \emph{switches}~\cite{Alur1990}.
Switches are either labeled with names from a finite alphabet $\Sigma$ of \emph{visible actions}, or
by a distinguished symbol $\tau\not\in\Sigma$, denoting \emph{internal actions} (silent moves).
We range over $\Sigma$ by $\sigma$ and over $\Sigma_{\tau}=\Sigma\cup\{\tau\}$ by $\mu$.

A TA further consists of a finite set $C$ of \emph{clocks}, defined
over a numerical \emph{clock domain} $\mathbb{T}_{\emph{C}}$ (\eg{} $\mathbb{T}_{\emph{C}}=\mathbb{N}_{0}$
for modeling \emph{discrete time} and $\mathbb{T}_{\emph{C}}=\mathbb{R}_{+}$ for modeling \emph{dense time}),
where we consider $\mathbb{T}_C=\mathbb{N}_0$ in all upcoming examples.
Clocks may be considered as constantly and synchronously increasing
yet independently resettable variables over $\mathbb{T}_{\emph{C}}$.
Clocks allow for measuring and restricting time intervals
corresponding to durations---or delays between occurrences---of actions in \emph{valid runs} of a TA.
Those restrictions are expressed by \emph{clock constraints} $\varphi$
to denote \emph{guards} for switches and \emph{invariants} for locations.
Guards restrict time intervals in which particular switches are enabled, 
whereas invariants restrict time intervals in which TA runs
are permitted to reside in particular locations.
In addition, each switch is labeled with a subset of clocks $R\subseteq C$ 
to be \emph{reset}.

\begin{definition}[Timed Automaton]\label{def:ta}
  A \emph{TA} is a tuple $\left(L,\ell_0,\Sigma,C,I,E\right)$, where
  \begin{itemize}
    \item $L$ is a finite set of \emph{locations} with \emph{initial location} $\ell_0\in L$,
    \item $\Sigma$ is a finite set of \emph{actions} such that $\tau\not\in\Sigma$,
    \item $C$ is a finite set of \emph{clocks} such that $C\cap\Sigma_\tau=\emptyset$,
    \item $I:L\rightarrow\mathcal{B}(C)$ is a function assigning \emph{invariants} to locations, and
    \item $E\subseteq L\times\mathcal{B}(C)\times\Sigma_{\tau}\times 2^{C}\times L$ 
    is a relation defining \emph{switches}.
  \end{itemize}
  The set $\mathcal{B}(C)$ of \emph{clock constraints} $\varphi$ over $C$ 
  is inductively defined as
  $$\varphi:=\mathsf{true}\mid c\sim n\mid c-c'\sim n\mid\varphi\land\varphi,~\emph{where}~{\sim}\in\{<,\leq,\geq,>\}, c, c'\in C, n\in\mathbb{T}_\textit{C}.$$
\end{definition}

We denote TA defined over sets $C$ and $\Sigma$ 
by $\mathcal{A}$ where we may omit an explicit mentioning of $C$ and/or $\Sigma$ if clear
from the context.
We further denote switches $(\ell,g,\mu,R,\ell')\in E$ by
$\ell\xrightarrow{g,\mu,R}\ell'$ for convenience.
Clock constraints neither contain operators for equality nor 
disjunction as both are equivalently expressible by the given grammar
(\eg{} switch guard $x=2$ may be expressed by $x\leq2\land x\geq2$, and $x<2\lor x>2$ 
may be expressed by duplicating the switch, one labeled with guard $x<2$ and one with $x>2$, respectively).

Moreover, we consider \emph{diagonal-free} TA
with clock constraints only containing atomic constraints of the form $c\sim n$
as for every TA, a language-equivalent diagonal-free TA can be constructed~\cite{Berard1998}.
Hence, we include \emph{difference constraints} $c-c'\sim n$ into $\mathcal{B}(C)$
solely for the sake of a concise representation of our subsequent constructions.
Similarly, we assume location invariants being unequal to $\mathsf{true}$ 
to be \emph{downward-closed} (\ie{} only having clauses of the form $c \leq n$ or $c < n$).
However, as two actual restrictions, we limit our considerations 
to (1)~constants $n\in \mathbb{Q}_{0}$ as real-valued bounds
would obstruct fundamental decidability properties of TA, as well as to
(2)~so-called timed \emph{safety} automata not including distinguished 
\emph{acceptance locations} for employing Büchi accepting-trace semantics for 
infinite TA runs~\cite{Henzinger1994,Alur1990}.

\paragraph{Semantics.}
The operational semantics of a given TA, defining all its valid (timed) runs,
may be defined in terms of \emph{Timed Labeled Transition Systems (TLTS)}~\cite{Henzinger1991}.
A TLTS \emph{state} is a pair $\langle \ell, u\rangle$ 
of active location $\ell \in L$ and \emph{clock valuation} $u\in C\rightarrow \mathbb{T}_{\emph{C}}$
assigning to each clock $c\in C$ the amount of time $u(c)$ elapsed since the last reset of $c$.
Thereupon, TLTS comprise two kinds of \emph{transitions}:
(1)~passage of time of duration $d \in \mathbb{T}_{C}$ while (inactively) residing in location $\ell$, 
leading to an updated clock valuation $u'$, and
(2)~instantaneous executions of switches $\ell\xrightarrow{g,\mu,R}\ell'$, 
leading from location $\ell$ to $\ell'$, accompanied by an occurrence of action $\mu\in\Sigma_{\tau}$.

Given clock valuation $u$, by $u+d$ with $d\in \mathbb{T}_{C}$,
we denote the \emph{updated clock valuation} mapping 
each clock $c\in C$ to the new value $u(c)+d$.
By $[R\mapsto 0]u$, with $R\subseteq C$, we further denote the updated
clock valuation mapping each clock $c\in R$ to value 0 (\emph{clock reset})
while preserving the values $u(c')$ of all other clocks $c'\in C\setminus R$.
Finally, by $u\in \varphi$, we denote that clock valuation $u$ 
\emph{satisfies} clock constraint $\varphi\in \mathcal{B}(C)$.
Concerning $\tau$-labeled transitions, we distinguish
between \emph{strong} and \emph{weak} TLTS semantics, where $\tau$-transitions are invisible in the latter case.

\begin{definition}[Timed Labeled Transition System]\label{def:tlts-semantics}
  The TLTS of TA $\mathcal{A}$ over $\Sigma$ is a tuple $(S,s_0,\hat{\Sigma},\twoheadrightarrow)$, where
  \begin{itemize}
    \item $S=L\times(\mathcal{C}\rightarrow\mathbb{T}_{\emph{C}})$ is a set of \emph{states} with \emph{initial state} $s_0=\langle \ell_{0}, [C\mapsto 0]\rangle\in S$,
    \item $\hat\Sigma=\Sigma\cup\Delta$ is a set of \emph{transition labels}, where
        $\Delta=\mathbb{T}_{\emph{C}}$ with $(\Sigma \cup \{\tau\})\cap\Delta=\emptyset$, and
    \item ${\twoheadrightarrow}\subseteq S\times(\hat{\Sigma}\cup\{\tau\})\times S$ 
  is a set of \emph{strong transitions} being the least relation satisfying the rules:
  \begin{itemize}
    \item $\angles{\ell,u}\xtwoheadrightarrow{d}\angles{\ell,u+d}$ if $(u+d)\in I(\ell)$ for $d\in\mathbb{T}_{\emph{C}}$, and
    \item $\angles{\ell,u}\xtwoheadrightarrow{\mu}\angles{\ell',u'}$ if $\ell\xrightarrow{g,\mu,R}\ell'$, $u\in g$, $u'=[R\mapsto 0]u$, $u'\in I(\ell')$ 
    and $\mu\in(\Sigma\cup\{\tau\})$.
  \end{itemize}
  \end{itemize}
  By ${\Twoheadrightarrow}\subseteq S\times \hat{\Sigma}\times S$, we denote
  a set of \emph{weak transitions} being the least relation satisfying the rules:
  \begin{itemize}
  \item $s\xTwoheadrightarrow{\sigma}s'$ if $s\xtwoheadrightarrow{\tau^{n}}s_1\xtwoheadrightarrow{\sigma}s_2\xtwoheadrightarrow{\tau^{m}}s'$ with $n, m\in \mathbb{N}_{0}$,
  \item $s\xTwoheadrightarrow{d}s'$ if $s\xtwoheadrightarrow{d}s'$,
  \item $s\xTwoheadrightarrow{0}s'$ if $s\xtwoheadrightarrow{\tau^{n}}s'$ with $n\in \mathbb{N}_{0}$, and
  \item $s\xTwoheadrightarrow{d+d'}s'$ if $s\xTwoheadrightarrow{d}s''$ and $s''\xTwoheadrightarrow{d'}s'$.
\end{itemize}
\end{definition}

We only consider TA with \emph{strongly convergent} TLTS (\ie{} without infinite $\tau$-sequences)
and refer to the TLTS semantics of TA $\mathcal{A}$ as $\mathcal{S}_{\mathcal{A}}$
or simply as $\mathcal{S}$ if clear from the context.
In addition, if not explicitly stated, we consider strong TLTS semantics, where
the corresponding weak version can by obtained by replacing $\twoheadrightarrow$ by $\Twoheadrightarrow$
in the following.

\begin{example}\label{example:ta}
Figure~\ref{fig:ta-example} shows two sample TA specifying (simplified) coffee machines 
with corresponding TLTS extracts shown in 
Figures~\ref{fig:tlts-example-machine} and~\ref{fig:tlts-example-machine-prime}.
\begin{figure}[tp]
  \hfill
  \subfloat[Coffee Machine]{\label{fig:ta-example-machine}

\scalebox{\scalefactor}{
\begin{tikzpicture}

\node[labeledroundedstateTA, initial, initial where=above] (idle) {Idle};
\node[labeledroundedstateTA, right=of idle, align=center] (warmup) {Warm Up\\$x\leq 1$};
\node[labeledroundedstateTA, right=of warmup] (fillcup) {Fill Cup};

\draw[transition] (idle) to node[pos=.5, align=center]{press\\$x:=0$} (warmup);
\draw[transition] (warmup) to node[pos=.5, align=center]{coffee\\$x\geq1$} (fillcup);
\draw[transition, bend left=45] (fillcup) to node[pos=.5, descr, align=center]{sugar\\$x\geq 2$\\$x:=0$} (idle);

\end{tikzpicture}
}}
  \hfill
  \subfloat[Coffee Machine$'$]{\label{fig:ta-example-machine-prime}

\scalebox{\scalefactor}{
\begin{tikzpicture}

\node[labeledroundedstateTA, initial, initial where=above] (idle) {Idle$'$};
\node[labeledroundedstateTA, right=of idle, align=center] (warmup) {Warm Up$'$\\$y\leq 1$};
\node[labeledroundedstateTA, right=of warmup] (fillcup) {Fill Cup$'$};
\node[labeledroundedstateTA, below=of fillcup, yshift=.33em] (internal) {Internal$'$};

\draw[transition] (idle) to node[pos=.5, align=center]{press\\$y:=0$} (warmup);
\draw[transition] (warmup) to node[pos=.5, align=center]{\strut\\coffee\\$y\geq1$\\$y:=0$} (fillcup);
\draw[transition] (fillcup) to node[auto]{$\tau$} (internal);
\draw[transition, bend left=15] (internal) to node[pos=.4, descr, align=center]{sugar\\$y\geq 1$\\$y:=0$} (idle);

\end{tikzpicture}
}}
  \hfill\strut

  \hfill
  \subfloat[TLTS of Fig.~\ref{fig:ta-example-machine}]{\label{fig:tlts-example-machine}

\scalebox{\scalefactor}{
\begin{tikzpicture}[node distance=.4]

\node[labeledstate] (warmupzero) {$\angles{\text{Warm Up},x=0}$};
\node[labeledstateempty, below=of warmupzero] (idleempty) {$\angles{\text{Fill Cup},x=1}$};
\node[labeledstate, below=of idleempty, initial, initial where=left] (idlezero) {$\angles{\text{Idle},x=0}$};
\node[labeledstate, below=of idlezero] (idleinfinite) {};
\node[labeledstate, right=of warmupzero] (warmupone) {$\angles{\text{Warm Up},x=1}$};
\node[labeledstate, below=of warmupone] (fillcupone) {$\angles{\text{Fill Cup},x=1}$};
\node[labeledstate, below=of fillcupone] (fillcuptwo) {$\angles{\text{Fill Cup},x=2}$};
\node[labeledstate, below=of fillcuptwo] (fillcupinfinite) {};

\draw[transition] (idlezero) to node[auto]{press} (warmupzero);
\draw[transition] (idlezero) to node[auto]{1} (idleinfinite);
\draw[transition] (warmupzero) to node[auto, swap]{1} (warmupone);
\draw[transition] (warmupone) to node[auto]{coffee} (fillcupone);
\draw[transition] (fillcupone) to node[auto]{1} (fillcuptwo);
\draw[transition] (fillcuptwo) to node[auto, swap]{sugar} (idlezero);
\draw[transition] (fillcuptwo) to node[auto]{1} (fillcupinfinite);

\end{tikzpicture}
}}
  \hfill
  \subfloat[TLTS of Fig.~\ref{fig:ta-example-machine-prime}]{\label{fig:tlts-example-machine-prime}

\scalebox{\scalefactor}{
\begin{tikzpicture}[node distance=.4]

\node[labeledstate] (warmupzero) {$\angles{\text{Warm Up$'$},y=0}$};
\node[labeledstateempty, below=of warmupzero] (idleempty) {$\angles{\text{Internal$'$},y=1}$};
\node[labeledstate, below=of idleempty, initial, initial where=left] (idlezero) {$\angles{\text{Idle$'$},y=0}$};
\node[labeledstate, below=of idlezero] (idleinfinite) {};
\node[labeledstate, right=of warmupzero, xshift=-.5em] (warmupone) {$\angles{\text{Warm Up$'$},y=1}$};
\node[labeledstate, below=of warmupone] (internalzero) {$\angles{\text{Internal$'$},y=0}$};
\node[labeledstate, below=of internalzero] (internalone) {$\angles{\text{Internal$'$},y=1}$};
\node[labeledstate, below=of internalone] (internalinfinite) {};

\node[labeledstate, right=of internalzero, xshift=-.5em] (fillcupzero) {$\angles{\text{Fill Cup$'$},y=0}$};
\node[labeledstate, below=of fillcupzero] (fillcupone) {$\angles{\text{Fill Cup$'$},y=1}$};
\node[labeledstate, below=of fillcupone] (fillcupinfinite) {};

\draw[transition] (idlezero) to node[auto]{press} (warmupzero);
\draw[transition] (idlezero) to node[auto]{1} (idleinfinite);
\draw[transition] (warmupzero) to node[auto, swap]{1} (warmupone);
\draw[transition] (warmupone.east) to node[auto]{coffee} (fillcupzero);
\draw[transition] (internalzero) to node[auto]{1} (internalone);
\draw[transition] (internalone) to node[auto, swap]{sugar} (idlezero);
\draw[transition] (internalone) to node[auto]{1} (internalinfinite);
\draw[transition] (fillcupzero) to node[auto, swap]{$\tau$} (internalzero);
\draw[transition] (fillcupone) to node[auto, swap]{$\tau$} (internalone);
\draw[transition] (fillcupzero) to node[auto]{1} (fillcupone);
\draw[transition] (fillcupone) to node[auto]{1} (fillcupinfinite);

\end{tikzpicture}
}}
  \hfill\strut
  \caption{TA of Two Similar Coffee Machines (Figs.~\ref{fig:ta-example-machine} and~\ref{fig:ta-example-machine-prime}) and TLTS (Figs.~\ref{fig:tlts-example-machine} and~\ref{fig:tlts-example-machine-prime})}\label{fig:ta-example}
\end{figure}
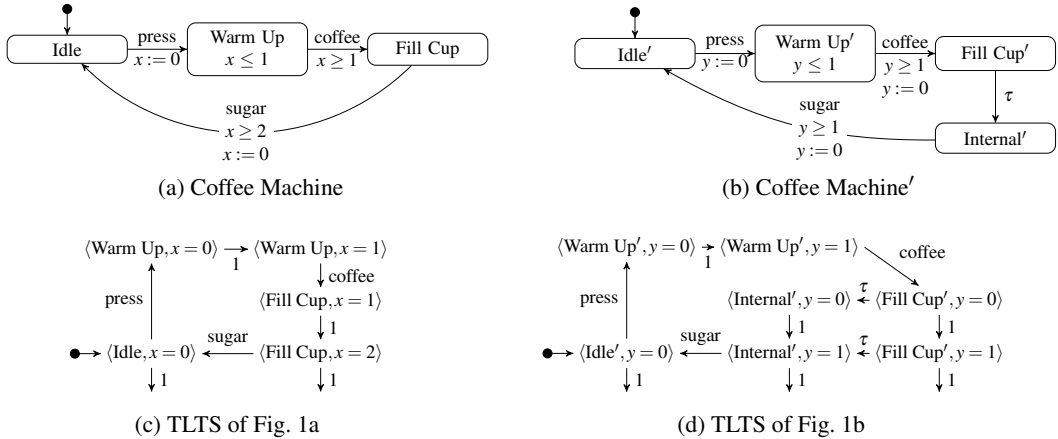
In state $\angles{\text{Warm Up},x=0}$, we can only let further time pass 
whereas in $\angles{\text{Warm Up},x=1}$, we have to 
choose \emph{coffee} due to the invariant.
In contrast, as neither location \emph{Idle} nor \emph{Fill Cup} 
has an invariant, we may wait for an unlimited amount of time 
thus resulting in infinitely many consecutive TLTS states.
Further note that the TLTS in Fig.~\ref{fig:tlts-example-machine-prime} contains a $\tau$-transition
which is only visible in the strong case.
\end{example}

\subsection{Timed Bisimulation}

We next revisit the notion of \emph{timed bisimulation} 
to semantically compare different TA defined over the same alphabet.
A timed (bi-)simulation relation may be defined
by directly adapting the classical notion of (bi-)simulation on LTS to TLTS.
State $s'$ of TLTS $\mathcal{S}_{\mathcal{A}'}$ \emph{timed simulates} state $s$
of TLTS $\mathcal{S}_{\mathcal{A}}$ if every transition enabled in $s$, either labeled 
with action $\mu \in \Sigma_{\tau}$ or delay $d\in \Delta$, is also enabled in 
$s'$ and the target state in $\mathcal{S}_{\mathcal{A}'}$, again, timed
simulates the respective target state in $\mathcal{S}_{\mathcal{A}}$.
Hence, TA $\mathcal{A}'$ \emph{timed simulates} $\mathcal{A}$ if initial state
$s_0'$ \emph{timed simulates} initial state $s_0$ and $\mathcal{A}'$ and $\mathcal{A}$ are \emph{timed bisimilar}
if the timed simulation relation is symmetric.

\begin{definition}[Timed Bisimulation~\cite{Weise1997}]\label{def:ta-bisim}
Let $\mathcal{A}$, $\mathcal{A}'$ be TA over $\Sigma$
with $C\,\cap\,C'=\emptyset$ and $\mathcal{R}\subseteq S \times S'$ such that for all $(s_1,s_1')\in \mathcal{R}$
it holds that
\begin{itemize}
  \item if $s_1 \xtwoheadrightarrow{\mu}s_2$ with $\mu\in \Sigma_{\tau}$,
then $s_1' \xtwoheadrightarrow{\mu}s_2'$ and $(s_2,s_2')\in \mathcal{R}$ and
  \item if $s_1 \xtwoheadrightarrow{d}s_2$ with $d\in \Delta$
then $s_1' \xtwoheadrightarrow{d}s_2'$ with $(s_2,s_2')\in \mathcal{R}$.
\end{itemize}
$\mathcal{A}'$ \emph{(strongly) timed simulates} $\mathcal{A}$, denoted $\mathcal{A}\sqsubseteq\mathcal{A}'$, iff $(s_0,s_0')\in \mathcal{R}$.
In addition, $\mathcal{A}'$ and $\mathcal{A}$ are \emph{(strongly) timed bisimilar}, denoted $\mathcal{A}\simeq\mathcal{A}'$, iff $\mathcal{R}$ is symmetric.
\end{definition}

\emph{Weak} timed (bi-)simulation can, again, be 
obtained by replacing $\twoheadrightarrow$ with $\Twoheadrightarrow$ in all
definitions (which we will omit if not relevant).

\begin{example}
Consider, again, $\mathcal{A}$ and $\mathcal{A}'$ in Figs.~\ref{fig:ta-example-machine} 
and~\ref{fig:ta-example-machine-prime}.
Strong timed (bi-)simulation does not hold between both models 
due to the $\tau$-step in $\mathcal{A}'$.
In contrast, for the weak case, we have $\mathcal{A}\sqsubseteq\mathcal{A}'$ as every action and delay
of $\mathcal{A}$ is also permitted by $\mathcal{A}'$ (\cf{} TLTS in Figs.~\ref{fig:tlts-example-machine} 
and~\ref{fig:tlts-example-machine-prime}).
Similarly, $\mathcal{A}'\sqsubseteq\mathcal{A}$ also holds
such that $\mathcal{A}$ and $\mathcal{A}'$ are weakly timed bisimilar.
\end{example}

We conclude this section by repeating 
the well-known result that systems being strong (timed) similar are also weak (timed) similar.

\begin{lemma}\label{lemma:sim-inclusions}
If $\mathcal{A}'$ strongly timed simulates $\mathcal{A}$, then $\mathcal{A}'$ weakly timed simulates $\mathcal{A}$~\cite{Yi1990}.
\end{lemma}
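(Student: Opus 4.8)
The plan is to show that any strong timed simulation relation $\mathcal{R}$ witnessing $\mathcal{A}\sqsubseteq\mathcal{A}'$ is \emph{also} a weak timed simulation relation witnessing the weak version of $\mathcal{A}\sqsubseteq\mathcal{A}'$. Concretely, I would let $\mathcal{R}\subseteq S\times S'$ be a strong timed simulation with $(s_0,s_0')\in\mathcal{R}$ and then verify that $\mathcal{R}$ satisfies the two clauses of Definition~\ref{def:ta-bisim} with $\xtwoheadrightarrow{}$ replaced by $\Twoheadrightarrow$. Since $\mathcal{R}$ already relates the initial states, it suffices to check the simulation conditions for weak moves, so the whole argument reduces to a case analysis over the inductive rules generating $\Twoheadrightarrow$ from $\twoheadrightarrow$ in Definition~\ref{def:tlts-semantics}.

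First I would handle action moves. Suppose $(s_1,s_1')\in\mathcal{R}$ and $s_1\xTwoheadrightarrow{\sigma}s_2$ for some visible $\sigma\in\Sigma$. By the first rule for $\Twoheadrightarrow$, this decomposes as $s_1\xtwoheadrightarrow{\tau^n}t_1\xtwoheadrightarrow{\sigma}t_2\xtwoheadrightarrow{\tau^m}s_2$. I would apply the strong-simulation clause repeatedly: each individual $\tau$-step and each $\sigma$-step in this chain is matched on the $\mathcal{A}'$ side by a corresponding strong step that stays inside $\mathcal{R}$. Composing these matching strong steps on the primed side yields a chain of the same shape, which by the very first rule for $\Twoheadrightarrow$ collapses into a single weak move $s_1'\xTwoheadrightarrow{\sigma}s_2'$ with $(s_2,s_2')\in\mathcal{R}$. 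The weak $\tau$-move $s_1\xTwoheadrightarrow{0}s_2$ (arising from $s_1\xtwoheadrightarrow{\tau^n}s_2$) is handled identically, matching each strong $\tau$ and invoking the third rule for $\Twoheadrightarrow$.

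Next I would treat delay moves. A single weak delay $s_1\xTwoheadrightarrow{d}s_2$ with $d\in\Delta$ comes, by the second rule for $\Twoheadrightarrow$, directly from a strong delay $s_1\xtwoheadrightarrow{d}s_2$, which $\mathcal{R}$ matches as a strong delay on the primed side, and a strong delay is \emph{a fortiori} a weak delay. The only genuinely inductive case is a composite delay $s_1\xTwoheadrightarrow{d+d'}s_2$ built via the fourth rule from $s_1\xTwoheadrightarrow{d}s''$ and $s''\xTwoheadrightarrow{d'}s_2$; here I would proceed by induction on the structure of the derivation of the weak delay, using the induction hypothesis to obtain matching primed weak delays for each summand and then recomposing them with the fourth rule to get $s_1'\xTwoheadrightarrow{d+d'}s_2'$ with $(s_2,s_2')\in\mathcal{R}$.

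The main obstacle, and really the only subtle point, is bookkeeping across the concatenated chains so that membership in $\mathcal{R}$ is preserved at \emph{every} intermediate state, not merely at the endpoints; this is exactly where the inductive definition of $\Twoheadrightarrow$ must be unfolded carefully so that each atomic strong step is separately matched before the pieces are reassembled. Everything else is a routine transfer of the strong clauses through the generating rules. Finally, since these arguments establish that the strong simulation relation $\mathcal{R}$ is itself a weak simulation relation containing $(s_0,s_0')$, we conclude that strong timed similarity $\mathcal{A}\sqsubseteq\mathcal{A}'$ implies weak timed similarity, as claimed.
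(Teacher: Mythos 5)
Your proof is correct, but it takes a genuinely different route from the paper's. The paper argues by contradiction: it assumes that $\mathcal{A}'$ strongly but not weakly simulates $\mathcal{A}$, observes that a failure of weak simulation would require some weak move of $\mathcal{A}$ with no weak match in $\mathcal{A}'$, and then notes that---because every weak transition is assembled from strong ones---such a failure would already require an unmatched strong transition, contradicting the hypothesis; the whole argument occupies three informal sentences and never exhibits a witnessing relation. You instead give the direct, constructive version: you take the strong simulation relation $\mathcal{R}$ itself as the candidate weak simulation and verify the weak clauses by case analysis on the four generating rules of $\Twoheadrightarrow$, decomposing $\xTwoheadrightarrow{\sigma}$ into a $\tau^{n}\sigma\tau^{m}$ chain, matching each atomic strong step inside $\mathcal{R}$, reassembling the matched chain on the primed side, and handling composite delays by induction on the derivation of the weak transition. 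Both proofs ultimately rest on the same fact---weak transitions decompose into strong ones, each of which is matchable---but yours makes the witness relation explicit and spells out the bookkeeping that keeps every intermediate state inside $\mathcal{R}$, precisely the step the paper's contradiction argument leaves implicit. The direct form is the more standard one for results of this kind and, in this instance, the more rigorous of the two.
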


\begin{proof}
We prove Lemma~\ref{lemma:sim-inclusions} by contradiction.
Assume TA $\mathcal{A}$ and $\mathcal{A}'$ with $\mathcal{A}'$ strongly timed simulating $\mathcal{A}$ and $\mathcal{A}'$ \emph{not} weakly timed simulating $\mathcal{A}$.
In this case, we require TLTS states $\angles{\ell_1,u_1}\in S$ and $\angles{\ell_1',u_1'}\in S'$ being reachable by a $\tau$-step such that for each $\angles{\ell_1',u_1'}\xtwoheadrightarrow{\eta}\angles{\ell_2',u_2'}\in{\twoheadrightarrow}'$ with $\eta\in\hat\Sigma$ there exists a $\angles{\ell_1,u_1}\xtwoheadrightarrow{\eta}\angles{\ell_2,u_2}\in{\twoheadrightarrow}$.
Due to the definition of weak transitions (see Def.~\ref{def:tlts-semantics}), we also require a transition $\angles{\ell_1,u_1}\xtwoheadrightarrow{\eta}\angles{\ell_2,u_2}\in{\twoheadrightarrow}$ not being enabled in $\angles{\ell_1,u_1}$ to prove that $\mathcal{A}'$ strongly timed simulates $\mathcal{A}$ and $\mathcal{A}'$ weakly timed simulates $\mathcal{A}$.
However, as these two assumptions are contradicting, it holds that $\mathcal{A}'$ weakly timed simulates $\mathcal{A}$ if $\mathcal{A}'$ strongly timed simulates~$\mathcal{A}$.
\end{proof}

\section{Checking Timed Bisimulation with Bounded Zone-History Graphs}\label{section:timed-bisimulation}

As TLTS are, in general, infinite-state and infinitely-branching LTS, 
they are only of theoretical interest, but 
do not facilitate effective timed (bi-)similarity checking.
In~\cite{Cerans1992}, a finite, yet often unnecessarily space-consuming 
characterization of timed bisimilarity 
is given using \emph{region graphs} instead of TLTS.
In contrast, Weise and Lenzkes~\cite{Weise1997} use
so-called \emph{full backward stable (FBS) graphs}, 
an adaption of the symbolic 
\emph{zone-graph} representation~\cite{Dill1989} of TA semantics enriched by transition labels.
Zone graphs are, in most cases, less 
space-consuming than region graphs.
We will also built upon FBS graphs in the following, but propose a novel 
definition, called \emph{(bounded) zone-history graphs}, to permit
a more concise characterization and 
scalable checking of timed \mbox{(bi-)}simulation.

\subsection{Zone Graphs}

A \emph{symbolic state} of TA $\mathcal{A}$ is a pair $\angles{\ell,\varphi}$ 
consisting of a location $\ell\in L$ and a \emph{zone} $\varphi\in\mathcal{B}(C)$, where $\varphi$
represents the maximum set $D = \{u:C\rightarrow \mathbb{T}_{C}\,|\,u \in \varphi\}$ of clock valuations $u$ 
satisfying clock constraint $\varphi$.
Hence, symbolic state $\angles{\ell,\varphi}$ comprises all TLTS 
states $\angles{\ell,u}\in S$ of $\mathcal{S}_\mathcal{A}$ 
with $u\in D$, where we may use $\varphi$ and $D$ interchangeably in the following.
The construction of a zone graph for a timed automaton is 
based on two operations on zones:

\begin{itemize}
  \item $D^\uparrow=\{u+d\mid u\in D,d\in\mathbb{T}_C\}$ denotes the \emph{future} of zone $D$, and
  \item $R(D)=\{[R\mapsto0]u\mid u\in D\}$ denotes the application of a set of \emph{clock resets} $R\subseteq C$ on zone $D$.
\end{itemize}

By $D_0$, we denote the \emph{initial zone} in which all clock values are mapped to constant $0$.
For each switch $\ell\xrightarrow{g,\mu,R}\ell'$, a corresponding
transition $\angles{\ell,D}\xrightsquigarrow{\mu}\angles{\ell',D'}$
is added with target zone $D'$ derived from source zone $D$
by considering the future $D^\uparrow$ of $D$, further restricted
by rgw switch guard $g$, the location invariants of $\ell$ and $\ell'$ as well as the clock resets $R$.

\begin{definition}[Zone Graph]\label{def:zone-graph}
The \emph{zone graph} of TA $\mathcal{A}$ over $\Sigma$ 
is a tuple $(\mathcal{Z},z_0,\Sigma,\rightsquigarrow)$, where
\begin{itemize}
  \item $\mathcal{Z}=L\times\mathcal{B}(C)$ is a set of \emph{symbolic states}
  with \emph{initial state} $z_0=\angles{\ell_0,D_0}$,
  \item $\Sigma$ is a set of \emph{actions}, and
  \item ${\rightsquigarrow}\subseteq\mathcal{Z}\times\Sigma_\tau\times\mathcal{Z}$ 
  is the least relation satisfying the rule:\\
  $\angles{\ell,D}\xrightsquigarrow{\mu}\angles{\ell',D'}$ if $\ell\xrightarrow{g,\mu,R}\ell'$ and $D'=R(D^\uparrow\land g\land I(\ell))\land I(\ell')$.
\end{itemize}
\end{definition}

Although zone graphs according to Def.~\ref{def:zone-graph} are, again, not necessarily finite, 
an equivalent, finite zone-graph representation for any given TA can be obtained
(1) by constructing an equivalent \emph{diagonal-free} TA only
containing atomic clock constraints of the form $x\sim r$~\cite{Berard1998}, 
and (2) by constructing for this TA a \emph{$k$-bounded} 
zone-graph representation according to Def.~\ref{def:zone-graph} where all zones being bound by a maximum global \emph{clock ceiling} $k$ 
using $k$-normalization~\cite{Rokicki1994,Pettersson1999}.

%

\subsection{Spans}

The comparison of zones of two different TA during timed bisimilarity-checking is based on the notion of \emph{spans}~\cite{Guha2012deciding}.
The span of clock $c\in C$ in zone $D$ is the interval $(\textit{lo},\textit{up})$ between
the minimum valuation $\textit{lo}$ and maximum valuation $\textit{up}$ of $c$ in $D$.
The span of zone $D$ is the least interval covering
the spans of all clocks in $D$.
By $\infty$, we denote upward-open intervals (\ie{} $d<\infty$ for all $d\in \mathbb{T}_{C}$), 
where $\infty$ behaves in calculations as usual.

We further introduce two operators for comparing spans $\textit{sp}_1$ and $\textit{sp}_2$: $\textit{sp}_1\preceq\textit{sp}_2$ 
denotes that $\textit{sp}_1$ is \emph{contained} in $\textit{sp}_2$, whereas $\textit{sp}_1\leq\textit{sp}_2$ denotes that the relative \emph{length} 
of $\textit{sp}_1$ is shorter than $\textit{sp}_2$.
Please note that we overload the notion of spans to likewise refer the set of elements within the interval defined by a span.
Hence, span $(\textit{lo},\textit{up})$ denotes the set of elements $n$ with $n\geq\textit{lo}\land n\leq\textit{up}$.

\begin{definition}[Span]\label{def:span}
Given zone $D$ and $c\in C$, we use the following notations.
\begin{itemize}
    \item $\emph{span}(c,D)=(\textit{lo},\textit{up})\in\mathbb{T}_C\times(\mathbb{T}_C\cup\{\infty\})$ is the smallest 
    interval such that $\forall u\in D: u(c)\geq\textit{lo}\land u(c)\leq\textit{up}$.
    \item $\emph{span}(\textit{lo},\textit{up})=\{n\in\mathbb{T}_C\mid n\geq\textit{lo}\land n\leq\textit{up}\}$.
  \item $(\textit{lo},\textit{up})\prec(\textit{lo}',\textit{up}') \Leftrightarrow\textit{lo}>\textit{lo}'\land\textit{up}<\textit{up}'$.
  \item $(\textit{lo},\textit{up})\preceq(\textit{lo}',\textit{up}') \Leftrightarrow\textit{lo}\geq\textit{lo}'\land\textit{up}\leq\textit{up}'$.
  \item $(\textit{lo},\textit{up})\leq(\textit{lo}',\textit{up}')\Leftrightarrow\textit{up}-\textit{lo}\leq\textit{up}'-\textit{lo}'$.
\end{itemize}
\end{definition}

Based on the notion of spans, we are able compare timing constraints of action occurrences 
of two different TA independent of the names of locations and clocks.
However, due to non-observability of clock resets,
it is not sufficient for timed (bi-)simulation checking to just compare
spans of pairs of potentially similar symbolic states one-by-one
as will be illustrated by the following example.

\begin{example}
Considering TA $\mathcal{A}$ and $\mathcal{A}'$ in 
Figs.~\ref{fig:reset-history-a} and~\ref{fig:reset-history-a-prime},
the span of action $a$ is $(0,2)$ in both TA due to the switch guards.
Additionally, the span for action $b$ is $(0,5)$ in both TA.
However, in $\mathcal{A}$, we may only wait for 5 time units before performing $b$ if we 
have instantaneously (\ie{} with 0 delay) performed $a$ before, whereas in $\mathcal{A}'$, the delay for 
performing $b$ is independent of previous delays due to the reset of $z$.
Hence, $\mathcal{A}\simeq\mathcal{A}'$ does not hold.
\begin{figure}[tp]
  \hfill
  \subfloat[$\mathcal{A}$]{\label{fig:reset-history-a}

\scalebox{\scalefactor}{
\begin{tikzpicture}[node distance=.9]

\node[labeledroundedstate, initial, initial where=left] (ell0) {$\ell_0$};
\node[labeledroundedstate, below=of ell0] (ell1) {$\ell_1$};
\node[labeledroundedstate, right=of ell1] (ell2) {$\ell_2$};

\draw[transition] (ell0) to node[auto, swap, align=right]{a\\$x\leq2$} (ell1);
\draw[transition] (ell1) to node[pos=.5, align=center]{b\\$x\leq5$} (ell2);
\draw[transition] (ell2) to node[auto, swap, align=center]{c\\$x:=0$} (ell0);

\end{tikzpicture}
}}
  \hfill
  \subfloat[$\mathcal{A}'$]{\label{fig:reset-history-a-prime}

\scalebox{\scalefactor}{
\begin{tikzpicture}[node distance=.9]

\node[labeledroundedstate, initial, initial where=left] (ell0) {$\ell_0'$};
\node[labeledroundedstate, below=of ell0] (ell1) {$\ell_1'$};
\node[labeledroundedstate, right=of ell1] (ell2) {$\ell_2'$};

\draw[transition] (ell0) to node[auto, swap, align=right]{a\\$y\leq2$\\$z:=0$} (ell1);
\draw[transition] (ell1) to node[pos=.5, align=center]{b\\$z\leq5$} (ell2);
\draw[transition] (ell2) to node[auto, swap, align=center]{c\\$y:=0$\\$z:=0$} (ell0);

\end{tikzpicture}
}}
  \hfill
  \subfloat[$\mathcal{ZH}$]{\label{fig:reset-history-graph-a}

\scalebox{\scalefactor}{
\begin{tikzpicture}[node distance=.4]

\node[labeledstate, initial, initial where=right] (ell0) {$\angles{\ell_0,x=0,\epsilon}$};
\node[labeledstate, below=of ell0] (ell1) {$\angles{\ell_1,x\leq2,\mathstrut$\\$(x\leq 2\land\chi\geq0\land\chi=x)}$};
\node[labeledstate, below=of ell1] (ell2) {$\langle\ell_2,x\leq5,$\\$(x\leq5\land\chi\geq0\land\chi=x)\cdot\mathstrut$\\$(x\leq5\land\chi\geq0\land\chi=x)\rangle$};
\node[labeledstate, below=of ell2] (empty) {$\cdots$};

\draw[transition] (ell0) to node[auto]{a} (ell1);
\draw[transition] (ell1) to node[auto]{b} (ell2);
\draw[transition] (ell2) to node[auto]{c} (empty);

\end{tikzpicture}
}}
  \hfill
  \subfloat[$\mathcal{ZH}'$]{\label{fig:reset-history-graph-a-prime}

\scalebox{\scalefactor}{
\begin{tikzpicture}[node distance=.4]

\node[labeledstate, initial, initial where=right] (ell0) {$\angles{\ell_0',y=0\land y=z,\epsilon}$};
\node[labeledstate, below=of ell0, align=center] (ell1) {$\angles{\ell_1',y\leq2\land z=0\land y\leq z+2,$\\$(y\leq2\land z=0\land y\leq z+2\land\strut$\\$\chi\geq0\land\chi=y\land\chi\leq z+2)}$};
\node[labeledstate, below=of ell1, align=center] (ell2) {$\langle\ell_2',y\geq0\land z\leq5\land y\leq z+2,$\\$(y\geq0\land z\leq5\land y\leq z+2\land\chi\geq0\land\chi=y\land\chi\leq z+2)\cdot\mathstrut$\\$(y\geq0\land z\leq5\land y\leq z+2\land\chi\geq0\land\chi\leq y\land\chi=z)\rangle$};
\node[labeledstate, below=of ell2] (empty) {$\cdots$};

\draw[transition] (ell0) to node[auto]{a} (ell1);
\draw[transition] (ell1) to node[auto]{b} (ell2);
\draw[transition] (ell2) to node[auto]{c} (empty);

\end{tikzpicture}
}}
  \hfill\strut
  \caption{False Positive using Plain Zone Graphs for Checking Timed Bisimilarity}\label{fig:reset-history}
\end{figure}
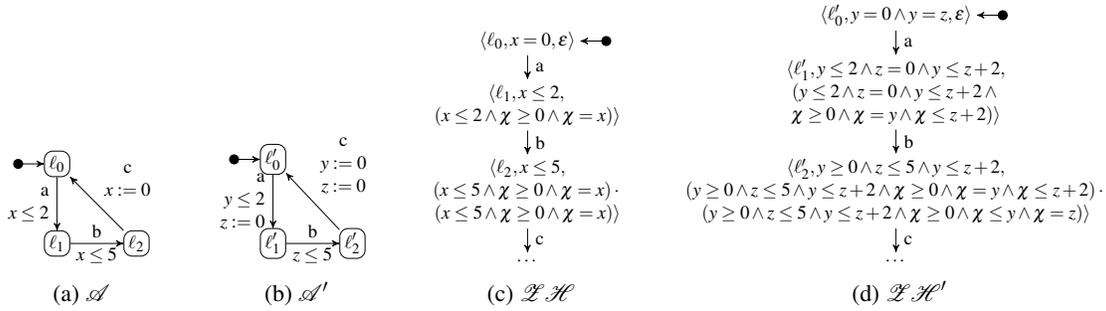
\end{example}

In~\cite{Weise1997}, this issue is tackled by further considering so-called 
\emph{good sequences} of FBS graphs in a separate post-check.
In contrast, we propose an alternative solution being more aligned
with the concepts of (bi-)simulation equivalence relations 
on state-transition graphs (\ie{} by enriching symbolic states
with additionally discriminating information).

\subsection{Zone-History Graphs}

Similar to the notion of \emph{causal history} as, for instance, proposed
for history-preserving event-structure semantics~\cite{Baldan1999}, we extend symbolic states $\angles{\ell,D}$ 
to triples $\angles{\ell,D,\mathcal{H}}$ further comprising a \emph{zone history} $\mathcal{H}\in \mathcal{B}(C)^*$
to memorize \emph{sequences of clock constraints} corresponding to the zones of predecessor states.
When stepping from zone $D$ to zone $D'$, the history
$\mathcal{H}$ is \emph{updated} to $\mathcal{H}'$ according to the updates applied to $D$ leading to $D'$.
By introducing a fresh clock $\chi\not\in C$ which is never explicitly reset, we measure the respective \emph{spans} of histories $\mathcal{H}$ in order to compare the sequences of intervals through which the current states are reachable from their predecessors.
By $H\cdot\mathcal{H}'$ and $\mathcal{H}'\cdot H$, respectively, we denote the \emph{concatenation} of further elements $H$ in front of, or after, history sequences $\mathcal{H}'$, where $\epsilon$ denotes the \emph{empty} sequence with $\mathcal{H}\cdot \epsilon = \epsilon \cdot \mathcal{H} = \mathcal{H}$.

\begin{definition}[Zone History]\label{def:zone-history}
Let $\mathcal{H}\in\mathcal{B}(C\cup\{\chi\})^*$ with $\chi\notin C$ be a \emph{zone history}.
The \emph{update} of history $\mathcal{H}$ for a switch $\ell\xrightarrow{g,\mu,R}\ell'$ 
leading from zone $D$ to $D'=R(D^\uparrow\land g\land I(\ell))\land I(\ell')$ is recursively defined as
\begin{itemize}
  \item $\emph{update}(\mathcal{H},D,D')=R(H^\uparrow\land g\land I(\ell))\land I(\ell')\cdot\emph{update}(\mathcal{H}',D,D')$ if $\mathcal{H}=H\cdot\mathcal{H}'$,
  \item $\emph{update}(\mathcal{H},D,D')=R((D\land\chi=0)^\uparrow \land g\land I(\ell)) \land I(\ell')$ if $\mathcal{H}=\epsilon$.
\end{itemize}
\end{definition}

We are now ready to define a \emph{zone-history graph} of TA $\mathcal{A}$ 
by extending plain \emph{zone graphs} (see Def.~\ref{def:zone-graph}) with zone histories.
The initial state $z_0=\angles{\ell_0,D_0,\epsilon}$ comprises initial location
$\ell_0$, initial zone $D_0$ and the \emph{empty} history.
The target state $\angles{\ell',D',\mathcal{H}'}$ of a transition
$\angles{\ell,D,\mathcal{H}}\xrightsquigarrow{\mu}\angles{\ell',D',\mathcal{H}'}$
corresponding to a switch $\ell\xrightarrow{g,\mu,R}\ell'$ is reached by updating zone $D$ to 
$D'$ as described before, and by additionally updating 
history $\mathcal{H}$ to $\mathcal{H}'=\emph{update}(\mathcal{H},D,D')$.
We write $z_i$ to refer to 
states $\angles{\ell_i,D_i,\mathcal{H}_i}$ (\ie{} every element of state $z_i$ has index $i$).

Please note that this construction only serves as a theoretical baseline
as it would, again, yield an infinite zone-history graph whenever the respective TA contains cyclic paths (thus 
leading to an infinitely growing history-component of zones).
In order to handle cyclic behavior, we will present an algorithm for pruning (possibly infinite) 
zone-history graphs into finite ones for effectively checking timed bisimilarity.

\begin{definition}[Zone-History Graph]\label{def:zone-history-graph}
The \emph{zone-history graph} of a TA $\mathcal{A}$ with $\chi\notin C$ over $\Sigma$ 
is a tuple $(\mathcal{Z},z_0,\Sigma,\rightsquigarrow)$, where
\begin{itemize}
  \item $\mathcal{Z}=L\times\mathcal{B}(C)\times\mathcal{B}(C\cup\{\chi\})^*$ is a set of \emph{symbolic states}
  with $z_0=\angles{\ell_0,D_0,\epsilon}$,
  \item $\Sigma$ is a set of \emph{actions}, and
  \item ${\rightsquigarrow}\subseteq\mathcal{Z}\times\Sigma_\tau\times\mathcal{Z}$ 
  is the least relation satisfying the rule:\\
  $z\xrightsquigarrow{\mu}z'$ if 
  $\ell\xrightarrow{g,\mu,R}\ell'$, $D'=R(D^\uparrow\land g\land I(\ell))\land I(\ell')$, and
  $\mathcal{H}'=\emph{update}(\mathcal{H},D,D')$.
\end{itemize}
We apply Algorithm~\ref{algorithm:finite-zhg} to generate a \emph{finite} zone-history graph from $(\mathcal{Z},z_0,\Sigma,\rightsquigarrow)$.
\end{definition}

Before we describe Algorithm~\ref{algorithm:finite-zhg} 
for pruning zone-history graphs in more detail (as well as the 
operators used in this algorithm), 
we first provide an example of an infinite zone-history graph.

\begin{example}\label{example:zhg-infinite}
Figures~\ref{fig:reset-history-graph-a} and~\ref{fig:reset-history-graph-a-prime} 
show extracts from the (infinite) zone-history graphs of TA $\mathcal{A}$ and $\mathcal{A}'$, 
respectively (\cf{}~Figs.~\ref{fig:reset-history-a} and~\ref{fig:reset-history-a-prime}), where $\mathcal{A}'$ has two clocks, $y$ and $z$.
The initial state of $\mathcal{ZH}'$ starts in location $\ell'_0$ and zone $y=0\land y=z$.
Considering the switch labeled with $a$, $y\leq2$ and reset of $z$,
we track clock differences in zone-history graphs (\eg{} $y=z$ in the initial state)
as usual, and update difference constraints in case of clock resets~\cite{Dill1989,Weise1997}.
Due to $y\leq2$, the difference between $y$ and $z$ may increase, 
thus resulting in $y\leq z+2$.
The updated zone history yields $\chi\leq2$ with span $(0,2)$.
Next, we update the existing entry of the zone history and append a new entry 
for the current step.
As both $\mathcal{A}$ and $\mathcal{A}'$ contain cycles, we proceed by adding states 
with updated histories, such that the resulting zone-history graphs will become infinite.
\end{example}

We next introduce the auxiliary operators used in Algorithm~\ref{algorithm:finite-zhg}.
By $|\mathcal{H}|$, we denote the \emph{length} of sequence $\mathcal{H}$
and by $\mathcal{H}\downarrow_{k}$, $k>0$, we denote the \emph{postfix} 
of $\mathcal{H}$ of length $k$ (or whole $\mathcal{H}$ if $k\geq |\mathcal{H}|$).
In this way, we compare sequences of spans of two histories of differing lengths 
by only considering a respective postfix of the longer one.
To this end, we compare \emph{spans of histories} 
by comparing the zones of the respective zone histories.
In particular, we use $\mathcal{H}\prec\mathcal{H}'$ and $\mathcal{H}\preceq\mathcal{H}'$ 
to denote an element-by-element comparison of the spans of clock $\chi$ (\ie{} the additional clock introduced in Def.~\ref{def:zone-history}).
Please note that we utilize the generic symbol ${\trianglelefteq}\in\{\prec,\preceq\}$ only 
for the sake of a compact definition.

\begin{definition}[Comparison of Zone Histories]\label{def:zh-comparison}
Let $\mathcal{H},\mathcal{H}'\in\mathcal{B}(C\cup\{\chi\})^*$ with $\chi\notin C$ be \emph{zone histories}.
The \emph{comparison} of the \emph{spans of histories} $\mathcal{H}$ and $\mathcal{H}'$ is recursively defined by
\begin{itemize}
  \item $\mathcal{H}\trianglelefteq\mathcal{H}'$ if $\mathcal{H}=\mathcal{H}'=\epsilon$,
  \item $\mathcal{H}\trianglelefteq\mathcal{H}' \Leftrightarrow \emph{span}(\chi,H)\trianglelefteq\emph{span}(\chi,H') \wedge \mathcal{H}''\trianglelefteq \mathcal{H}'''$ 
  if $|\mathcal{H}|=|\mathcal{H}'| \wedge \mathcal{H}=H\cdot\mathcal{H}'' \wedge \mathcal{H}'=H'\cdot\mathcal{H}'''$, and
  \item $\mathcal{H}\trianglelefteq\mathcal{H}'\Leftrightarrow \mathcal{H}\downarrow_{k}\trianglelefteq\mathcal{H}'\downarrow_{k}$ if 
  $|\mathcal{H}|\neq |\mathcal{H}'|$ and $k=\min(|\mathcal{H}|,|\mathcal{H}'|)$,
\end{itemize}
where ${\trianglelefteq}\in\{\prec,\preceq\}$.
\end{definition}

We illustrate the comparison of zone histories by the following example.

\begin{example}\label{example:zh-comparison}
Consider the following zone histories (where we omit all clocks except $\chi$ for the sake of readability):
\begin{itemize}
  \item $\mathcal{H}=(\chi>4)\cdot(\chi\geq10)\cdot(\chi\geq12\land\chi\leq42)$
  \item $\mathcal{H}'=(\chi>7)\cdot(\chi\geq12\land\chi\leq42)$
  \item $\mathcal{H}''=(\chi\geq17\land\chi\leq41)$
\end{itemize}
Comparing these zone histories, it holds that $\mathcal{H}\preceq\mathcal{H}'$, $\mathcal{H}''\prec\mathcal{H}$, and $\mathcal{H}''\prec\mathcal{H}'$.
In contrast, $\mathcal{H}\preceq\mathcal{H}''$, $\mathcal{H}'\preceq\mathcal{H}''$, and $\mathcal{H}'\preceq\mathcal{H}$ do not hold.
\end{example}

Moreover, we define respective comparison operators on zone histories potentially having
different lengths.
First, $\mathcal{H}\asymp\mathcal{H}'$ compares histories $\mathcal{H}$ and $\mathcal{H}'$
by considering the longest possible postfixes of both zone histories.

\begin{definition}[Postfix-Equality of Zone Histories]\label{def:zh-equality}
Let $\mathcal{H},\mathcal{H}'\in\mathcal{B}(C\cup\{\chi\})^*$ be \emph{zone histories}.
$\mathcal{H}$ and $\mathcal{H}'$ are equal, denoted by $\mathcal{H}\asymp\mathcal{H}'$, iff $\mathcal{H}\preceq\mathcal{H'}$ and $\mathcal{H}'\preceq\mathcal{H}$.
\end{definition}

Second, $\mathcal{H}\asymp_\circlearrowleft\mathcal{H}'$ further \emph{cuts} postfixes in case of periodic zone histories.
The usage of this operator will be explained in more detail later on 
(see Algorithm~\ref{algorithm:finite-zhg}).

\begin{definition}[Cut-Equality of Zone Histories]\label{def:zh-periodic-comparison}
Let $\mathcal{H},\mathcal{H}'\in\mathcal{B}(C\cup\{\chi\})^*$ with $\chi\notin C$ be \emph{zone histories}.
The \emph{periodic comparison} of the \emph{spans of histories} $\mathcal{H}$ and $\mathcal{H}'$ is recursively defined by
\begin{itemize}
  \item $\mathcal{H}\trianglelefteq_\circlearrowleft\mathcal{H}'\Leftrightarrow\mathcal{H}\downarrow_{k}\trianglelefteq\mathcal{H}'\downarrow_{k}$ if $k=\min(|\mathcal{H}|,|\mathcal{H}'|,|\omega|)$ with $\omega=|\mathcal{H}|-|\mathcal{H}'|$ and
  \item $\mathcal{H}\asymp_\circlearrowleft\mathcal{H}'$ if $\mathcal{H}\trianglelefteq_\circlearrowleft\mathcal{H'}$ and $\mathcal{H}'\trianglelefteq_\circlearrowleft\mathcal{H}$,
\end{itemize}
where ${\trianglelefteq}\in\{\prec,\preceq\}$.
\end{definition}

We, again, illustrate the application of these operators by the following example.

\begin{example}\label{example:zh-equality}
Consider, again, the zone histories presented in Example~\ref{example:zh-comparison}.
For instance, $\mathcal{H}\asymp\mathcal{H}'$ does not hold as $\mathcal{H}'\preceq\mathcal{H}$ does not hold.
However, it holds that $\mathcal{H}\asymp_\circlearrowleft\mathcal{H}'$ as this operator only compares a
postfix of length $|\mathcal{H}|-|\mathcal{H}'|=1$ instead of $\mathcal{H}\asymp\mathcal{H}'$ 
which would consider a postfix of length $\min(|\mathcal{H}|,|\mathcal{H}'|)=2$.
\end{example}

Next, we describe Algorithm~\ref{algorithm:finite-zhg} for pruning zone-history graphs.
In particular, the algorithm takes as input a (potentially infinite) zone-history graph 
$(\mathcal{Z},z_0,\Sigma,\rightsquigarrow)$ and returns a finite zone-history graph $(\mathcal{Z}',z_0,\Sigma,\rightsquigarrow')$ with
equivalent behavior. 
To this end, the algorithm stops the unrolling of cyclic behavior based on a cut criterion on zone histories as 
described above.
\begin{algorithm}[tp]
  \caption{Generating Finite Zone-History Graphs}\label{algorithm:finite-zhg}
 
\SetKwInOut{Input}{Input}
\SetKwInOut{Output}{Output}

\SetKwProg{procedure}{procedure}{}{}

\SetKwFunction{main}{\normalfont\scshape main}
\SetKwFunction{getAllValueSchemas}{\normalfont\scshape getAllValueSchemas}
\SetKwFunction{getTestCase}{\normalfont\scshape getTestCase}

\DontPrintSemicolon

\Input{zone-history graph $(\mathcal{Z},z_0,\Sigma,\rightsquigarrow)$}
\Output{\emph{finite} zone-history graph $(\mathcal{Z}',z_0,\Sigma,\rightsquigarrow')$}

\procedure{\main}{
	${\rightsquigarrow'}:=\emptyset$\; \label{line:init-trans}
	$\mathcal{Z}':=\{z_0\}$\;
	$\widehat{\mathcal{Z}}:=\{z_0\}$\; \label{line:init-ws}
	\While{$\widehat{\mathcal{Z}}\neq\emptyset$}{ \label{line:while-start}
		$z\leftarrow\widehat{\mathcal{Z}}$ \tcp*[r]{pick element without removing it} \label{line:pick-state}
		\ForEach{$z\xrightsquigarrow{\mu}z'$}{
			\If{$\exists\angles{\ell,D,\mathcal{H''}}\in\mathcal{Z}':\left(\mathcal{H}\asymp_\circlearrowleft\mathcal{H}''\land\mathcal{H}\neq\mathcal{H}''\right)\land\mathstrut$\linebreak$\exists\angles{\ell',D',\mathcal{H}'''}\in\mathcal{Z}':\mathcal{H}'''\asymp_\circlearrowleft\mathcal{H}'$}{ \label{line:if-start}
				${\rightsquigarrow'}:={\rightsquigarrow'}\cup\{z\xrightsquigarrow{\mu}\mathstrut'\angles{\ell',D',\mathcal{H}'''}\}$\; \label{line:add-regular}
			} \label{line:if-end}
			\Else{ \label{line:else-start}
				${\rightsquigarrow'}:={\rightsquigarrow'}\cup\{z\xrightsquigarrow{\mu}\mathstrut'z'\}$\; \label{line:add-trans}
				$\mathcal{Z}':=\mathcal{Z}'\cup\{z'\}$\; \label{line:add-state}
				$\widehat{\mathcal{Z}}:=\widehat{\mathcal{Z}}\cup\{z'\}$\; \label{line:add-ws}
			} \label{line:else-end}
		}
		$\widehat{\mathcal{Z}}:=\widehat{\mathcal{Z}}\setminus\{z\}$\;
	} \label{line:while-end}
	\Return{$(\mathcal{Z}',z_0,\Sigma,\rightsquigarrow')$} \label{line:return}
}
\end{algorithm}
We start by initializing transition relation $\rightsquigarrow'$, the set of 
states $\mathcal{Z}'$, and a working set $\widehat{\mathcal{Z}}$ containing 
states which have not yet been processed (see lines \ref*{line:init-trans}--\ref*{line:init-ws}).
The main loop iterates over this working set $\widehat{\mathcal{Z}}$ until $\widehat{\mathcal{Z}}=\emptyset$ (lines \ref*{line:while-start}--\ref*{line:while-end}).
As a first step of the while-loop, we pick a state $z\in\widehat{\mathcal{Z}}$ 
(without removing it) from the working set (line~\ref*{line:pick-state}).
Then, we check two conditions for each transition $z\xrightsquigarrow{\mu}z'$ 
from transition relation $\rightsquigarrow$ (line~\ref*{line:if-start}).

\begin{enumerate}
  \item Does there already exist some state 
  $\angles{\ell,D,\mathcal{H''}}\in\mathcal{Z}'$ satisfying $\mathcal{H}\asymp_\circlearrowleft\mathcal{H}''$ and $\mathcal{H}\neq\mathcal{H}''$?
  Therewith, we check whether a state $\angles{\ell,D,\mathcal{H''}}\neq z$ has already been reached in a previous step 
  having an equivalent history \wrt{} $\asymp_\circlearrowleft$.
  \item Does there already exist a state $\angles{\ell',D',\mathcal{H}'''}\in\mathcal{Z}'$ 
  satisfying $\mathcal{H}'\asymp_\circlearrowleft\mathcal{H}'''$?
\end{enumerate}

For both properties, we utilize operator $\asymp_\circlearrowleft$ (see Def.~\ref{def:zh-periodic-comparison}) 
for history comparison as this operator only compares the postfix of 
histories reaching back to the last iteration of cyclic behavior (thus \emph{cutting} histories in case of regularity).
If this is the case, we add a transition from $z$ to the previously reached state $\angles{\ell',D',\mathcal{H}'''}$ (line~\ref*{line:add-regular}).
In this way, history unrolling is \emph{cut} whenever states with similar location-zone pairs 
and compatible zone-history postfixes have already been reached before.
Otherwise (lines \ref*{line:else-start}--\ref*{line:else-end}), 
we add the transition and its target state to $\rightsquigarrow'$ and $\mathcal{Z}'$, respectively (lines \ref*{line:add-trans}--\ref*{line:add-state}).
Furthermore, we add the newly explored target state $z'$ to the working set $\widehat{\mathcal{Z}}$ (line~\ref*{line:add-ws}).
Finally, when $\widehat{\mathcal{Z}}=\emptyset$ eventually holds, 
the while-loop terminates and we return the \emph{finite} 
zone-history graph $(\mathcal{Z}',z_0,\Sigma,\rightsquigarrow')$ (line~\ref*{line:return}).
Intuitively, Algorithm~\ref{algorithm:finite-zhg} always eventually terminates 
(which we will prove later in this section) as traversing a loop multiple times always results 
in the exact same postfix in the respective zone-history component
(such that the zone histories are equivalent \wrt{} $\asymp_\circlearrowleft$).

\begin{example}
Consider, again, the TA in Fig.~\ref{fig:reset-history-a} and the corresponding (infinite) 
zone-history graph in Fig.~\ref{fig:reset-history-graph-a} as described in Example~\ref{example:zhg-infinite}.
Here, the target state of the next transition labeled with $c$ does not meet the cut criterion, 
but rather imposes further loop unrolling which results in adding the target state into the zone-history graph.
This newly added state has the same location as the initial state (\ie{} $\ell_0$) as well as 
the same zone (\ie{} $x=0$) due to the reset of clock $x$.
From here, traversing the switch labeled with action $a$ for the second time results in the new history element
$$x\leq2\land\chi\geq0\land\chi=x$$ which is appended to the zone history.
In line~\ref*{line:if-start} of Algorithm~\ref{algorithm:finite-zhg}, 
we then check whether $\mathcal{H}\asymp_\circlearrowleft\mathcal{H}''$ holds.
In this case, $\mathcal{H}$ is the zone history of the current state where we 
just appended the new element as described above.
Furthermore, $\mathcal{H}''$ is the state comprising location $\ell_1$ of 
the zone-history graph in Fig.~\ref{fig:reset-history-graph-a}.
As $\min(|\mathcal{H}|,|\mathcal{H}''|,|\mathcal{H}|-|\mathcal{H}''|)=1$, we only have to compare the postfix 
of length 1 of $\mathcal{H}$ and $\mathcal{H}''$.
As these postfixes (\ie{} the history element described above and the history element depicted in Fig.~\ref{fig:reset-history-graph-a}) 
are equivalent \wrt{} $\asymp_\circlearrowleft$, the condition in line~\ref*{line:if-start} of Algorithm~\ref{algorithm:finite-zhg} 
is satisfied such that we can now cut the zone history and by adding a transition leading back to 
the already existing state thus resulting in a finite zone-history graph.
\end{example}

Next, we formally prove that zone-history graphs resulting from applying Algorithm~\ref{algorithm:finite-zhg} are 
always finite.

\begin{proposition}\label{proposition:algorithm-finite}
Let $\mathcal{A}$ be a TA.
Then, zone-history graph $\mathcal{ZH}_\mathcal{A}$ is finite.
\end{proposition}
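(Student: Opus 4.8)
The plan is to reduce the statement to termination of Algorithm~\ref{algorithm:finite-zhg}, since the returned graph $(\mathcal{Z}',z_0,\Sigma,\rightsquigarrow')$ has a finite transition relation whenever $\mathcal{Z}'$ is finite: each location of $\mathcal{A}$ has only finitely many outgoing switches, so ${\rightsquigarrow'}\subseteq\mathcal{Z}'\times\Sigma_\tau\times\mathcal{Z}'$ is finite once $\mathcal{Z}'$ is. A fresh state is added to $\mathcal{Z}'$ only in the else-branch (line~\ref{line:add-state}), and each such state is simultaneously placed in the working set $\widehat{\mathcal{Z}}$, while every loop iteration removes exactly one state from $\widehat{\mathcal{Z}}$. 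Hence the while-loop terminates iff only finitely many fresh states are ever added, and it suffices to bound $|\mathcal{Z}'|$.

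First I would fix the underlying finite structure. By the remark following Def.~\ref{def:zone-graph}, we may assume $\mathcal{A}$ to be diagonal-free and apply $k$-normalization, so that only finitely many distinct zones $D$ occur; together with finiteness of $L$ this makes the set $P=\{\angles{\ell,D}\}$ of location-zone pairs finite, say $|P|=N$. I would also observe that the construction is finitely branching: the successors of a state $\angles{\ell,D,\mathcal{H}}$ are determined by the finitely many switches leaving $\ell$, each inducing a single, deterministically computed successor via Def.~\ref{def:zone-history-graph}.

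Next, the core argument proceeds by contradiction. Suppose the algorithm does not terminate, i.e.\ infinitely many fresh states are added. These states form a rooted discovery tree (each fresh $z'$ has as parent the state $z$ from which it was explored); being finitely branching and infinite, this tree contains an infinite path by K\"onig's Lemma. Along this path the location-zone components range over the finite set $P$, so some pair $\angles{\ell,D}$ recurs infinitely often; and since the location-zone graph is finite, some cyclic switch-sequence through $\angles{\ell,D}$ is traversed twice in immediate succession. The decisive claim is then a periodicity property of the history component: traversing the same cycle twice from the same location-zone pair yields histories that are equivalent with respect to $\asymp_\circlearrowleft$ (Def.~\ref{def:zh-periodic-comparison}). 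I would establish this by unfolding the update operation of Def.~\ref{def:zone-history}: the newest $m$ history elements after a cycle of length $m$ arise by initializing a copy with $\chi=0$ at each of the $m$ switches and then applying the subsequent future/guard/invariant/reset operations; because the zones recurring along the repeated cycle are identical, these operations---and hence the $\chi$-spans of the length-$m$ postfix---coincide on the two traversals. Since $\asymp_\circlearrowleft$ compares exactly the postfix of length $|\omega|$ equal to the cycle length, the two states are $\asymp_\circlearrowleft$-equivalent. Consequently, when the algorithm processes the transition that would add the second-traversal target, both conditions of line~\ref{line:if-start} are met (the source matches an earlier state under $\asymp_\circlearrowleft$ with a different history, and a matching target already lies in $\mathcal{Z}'$), so a back-edge is added instead of a fresh state---contradicting the assumption that this state was freshly added.

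I expect the main obstacle to be making the periodicity claim fully rigorous, i.e.\ proving that the $\chi$-span of every element in the length-$m$ postfix depends only on the cycle's switches and the (recurring) zone at the cycle's entry, and not on how often the cycle was previously unrolled. This requires care because $\chi$ is never reset and therefore accumulates across history elements; the point to verify is that each element is re-initialized with $\chi=0$ at its creation (the $\mathcal{H}=\epsilon$ case of the update), so that the $\chi$-span recorded within a single element reflects only the bounded window of switches applied since that element was appended. Once this localization is established, the identical evolution of zones along the repeated cycle forces identical postfix spans, the $\asymp_\circlearrowleft$-cut fires after at most one superfluous unrolling, $|\mathcal{Z}'|$ is bounded, and finiteness of $\mathcal{ZH}_\mathcal{A}$ follows.
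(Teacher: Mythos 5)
Your proposal follows essentially the same route as the paper: reduce finiteness to the finitely many location-zone pairs obtained via diagonal-freeness and $k$-normalization, and then argue that the $\asymp_\circlearrowleft$ cut criterion of Algorithm~\ref{algorithm:finite-zhg} must eventually fire because repeated traversal of a cycle produces $\asymp_\circlearrowleft$-equivalent history postfixes. Your version is in fact more rigorous than the paper's (which simply asserts that ``the history eventually becomes regular''), since you make the König's-Lemma/pigeonhole skeleton explicit and correctly isolate the periodicity of the $\chi$-spans along a repeated cycle as the one claim that genuinely needs proof.
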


\begin{proof}
Zone graphs $(\mathcal{Z},z_0,\Sigma,\rightsquigarrow)$ (without histories) are not necessarily finite but it has been shown that an equivalent finite zone graph $(\mathcal{Z},z_0,\Sigma,\rightsquigarrow_k)$ can be obtained by constructing a \emph{k-bounded} zone graph with all zones being bound by a maximum global \emph{clock ceiling} $k$ using $k$-normalization~\cite{Rokicki1994,Pettersson1999}.
Hence, it remains to be shown that when adding histories, $k$-normalized zone-history graphs $(\mathcal{Z},z_0,\Sigma,\rightsquigarrow_k)$ remain finite.
Here, histories $\mathcal{H}$ are constructed in a way such that $\mathcal{H}$ is eventually cut.
In particular, whenever there already exists a state with the same location $\ell$ and an equivalent zone $D$, we check if $\mathcal{H}\asymp_\circlearrowleft\mathcal{H}'$ and do not add a new state in this case (\ie{} we add a transition to the existing state $\angles{\ell,D,\mathcal{H}}$ instead of adding the new state $\angles{\ell,D,\mathcal{H}'}$, see lines \ref*{line:if-start}--\ref*{line:if-end} of Algorithm~\ref{algorithm:finite-zhg}).
To this end, $\mathcal{H}\asymp_\circlearrowleft\mathcal{H}'$ compares the postfix of $\mathcal{H}$ and $\mathcal{H}'$ of length $n=\min(|\mathcal{H}|,|\mathcal{H}'|,|\omega|)$ with $\omega=|\mathcal{H}|-|\mathcal{H}'|$ (see Def.~\ref{def:zh-periodic-comparison}).
As a result, we only compare the newest $n$ elements of a history when unrolling a loop, where $n$ is the number of locations on the loop.
Therefore, the history eventually becomes regular as we only compare the postfix of length $n$ and TA are \emph{finite} state-transition graphs (see Def.~\ref{def:ta}).
\end{proof}

A proof concerning the correctness of this construction in terms of behavior preservation
will follow later on this section.

\subsection{Composite Zone-History Graphs}

To handle TA with non-deterministic behavior (including $\tau$-steps) 
we require one more concept as illustrated by the following example.

\begin{example}
In Fig.~\ref{fig:zone-split}, we have $\mathcal{A}\simeq\mathcal{A}'$ as both TA permit action 
$a$ within span $(0,3)$ and both switches of $\mathcal{A}'$ labeled with $a$ can be simulated by $\mathcal{A}$.
\begin{figure}[tp]
  \hfill
  \subfloat[$\mathcal{A}$]{\label{fig:zone-split-a}

\scalebox{\scalefactor}{
\begin{tikzpicture}[node distance=1]

\node[labeledroundedstate, initial, initial where=left] (ell0) {$\ell_0$};
\node[labeledroundedstate, below=of ell0] (ell1) {$\ell_1$};

\draw[transition] (ell0) to node[auto, swap, align=right]{a\\$x\leq3$} (ell1);

\end{tikzpicture}
}}
  \hfill
  \subfloat[$\mathcal{A}'$]{\label{fig:zone-split-a-prime}

\scalebox{\scalefactor}{
\begin{tikzpicture}[node distance=1]

\node[labeledroundedstate, initial, initial where=left] (ell0) {$\ell_0'$};
\node[labeledroundedstate, below=of ell0] (ell1) {$\ell_1'$};
\node[labeledroundedstate, left=of ell1] (ell2) {$\ell_2'$};

\draw[transition] (ell0) to node[auto, align=left]{a\\$y>1\land\mathstrut$\\$y\leq3$} (ell1);
\draw[transition] (ell0) to node[pos=.3, left, descr, align=center]{a\\$y\leq2$} (ell2);

\end{tikzpicture}
}}
  \hfill
  \subfloat[$\mathcal{ZH}$]{\label{fig:zone-split-graph-a}

\scalebox{\scalefactor}{
\begin{tikzpicture}[node distance=.4]

\node[labeledstate, initial, initial where=right] (ell0) {$\angles{(\ell_0,\ell_0'),x=0,\epsilon}$};
\node[labeledstate, below=of ell0] (ell1) {$\angles{(\ell_1,\ell_2'),x\leq2,\mathcal{H}_1}$};
\node[labeledstate, below=of ell1] (ell2) {$\angles{(\ell_1,\ell_1'),x>1\land x\leq3,\mathcal{H}_2}$};

\draw[transition] (ell0) to node[auto]{a} (ell1);
\draw[transition, bend right=40] (ell0.south west) to node[auto, swap]{a} (ell2);

\end{tikzpicture}
}}
  \hfill
  \subfloat[$\mathcal{ZH}'$]{\label{fig:zone-split-graph-a-prime}

\scalebox{\scalefactor}{
\begin{tikzpicture}[node distance=.4]

\node[labeledstate, initial, initial where=right] (ell0) {$\angles{(\ell_0,\ell_0'),x=0,\epsilon}$};
\node[labeledstate, below=of ell0] (ell1) {$\angles{(\ell_1,\ell_2'),y\leq2,\mathcal{H}_1'}$};
\node[labeledstate, below=of ell1] (ell2) {$\angles{(\ell_1,\ell_1'),y>1\land y\leq3,\mathcal{H}_2'}$};

\draw[transition] (ell0) to node[auto]{a} (ell1);
\draw[transition, bend right=40] (ell0.south west) to node[auto, swap]{a} (ell2);

\end{tikzpicture}
}}
  \hfill\strut
  \caption{Example for State Splitting due to Non-determinism}\label{fig:zone-split}
\end{figure}
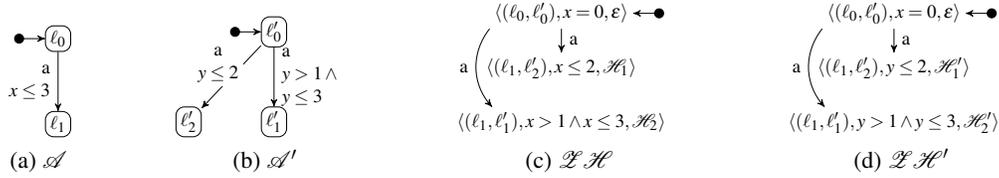
However, the single switch of $\mathcal{A}$ cannot be simulated by either of the two switches of $\mathcal{A}'$.
Hence, generating comparable zone-history graphs for timed-bisimilarity checking may require 
\emph{splitting} of states in case of non-determinism with overlapping spans of guards, as shown in 
Figs.~\ref{fig:zone-split-graph-a} and~\ref{fig:zone-split-graph-a-prime} for $\mathcal{A}$ and $\mathcal{A}'$.
We call this construction \emph{composite} zone-history graph.
\end{example}

The (in general non-symmetric) 
construction of a \emph{composite zone-history graph} $\mathcal{ZH}_{\mathcal{A}\otimes\mathcal{A}'}$ 
for TA $\mathcal{A}$ with respect to $\mathcal{A}'$ is based on the \emph{zone-history graph} $\mathcal{ZH}_{\mathcal{A}\times\mathcal{A}'}$
for the (synchronous) parallel product $\mathcal{A}\times\mathcal{A}'$,
comprising only behavior shared by $\mathcal{A}$ and $\mathcal{A}'$.
Additionally, $\mathcal{ZH}_{\mathcal{A}\otimes\mathcal{A}'}$ also comprises all further behavior
of $\mathcal{ZH}_\mathcal{A}$ potentially not enabled by $\mathcal{ZH}_{\mathcal{A}'}$ 
such that the result is (1) bisimilar to $\mathcal{ZH}_\mathcal{A}$ and (2) facilitates a 
(bi-)simulation check with $\mathcal{ZH}_\mathcal{A}'$ even in the presence of non-deterministic behavior.
In order to construct the composite zone-history graph $\mathcal{ZH}_{\mathcal{A}\otimes\mathcal{A}'}$, 
we first define the parallel product $\mathcal{A}\times\mathcal{A}'$.

\begin{definition}[Parallel Product]\label{def:parallel-product}
Let $\mathcal{A}$, $\mathcal{A}'$ be TA over $\Sigma$ with $C\,\cap\, C' = \emptyset$.
The \emph{parallel product} $\mathcal{A} \times \mathcal{A}' = \left(L \times L',(\ell_0,\ell_0'),\Sigma,C\cup C', I_{\times}, E_{\times}\right)$ 
is a TA with $I_{\times}(\ell,\ell')=I(\ell)\land I(\ell')$ and $E_{\times}$ being the least relation satisfying:
\begin{tabbing}
(1) \= $(\ell_1,\ell_1')\xrightarrow{g\land g',\sigma,R\cup R'}_\times(\ell_2,\ell_2')\in E_\times$ \= if $\ell_1\xrightarrow{g,\sigma,R}\ell_2\in E \land \ell_1'\xrightarrow{g',\sigma,R'}\ell_2'\in E'$, \\
(2) \> $(\ell_1,\ell_1')\xrightarrow{g,\tau,R}_\times(\ell_2,\ell_1')\in E_\times$ \> if $\ell_1\xrightarrow{g,\tau,R}\ell_2\in E$, and \\
(3) \> $(\ell_1,\ell_1')\xrightarrow{g',\tau,R'}_\times(\ell_1,\ell_2')\in E_\times$ \> if $\ell_1'\xrightarrow{g',\tau,R'}\ell_2'\in E'$.
\end{tabbing}
\end{definition}

Next, we introduce two auxiliary transition relations from which we derive the transition 
relation $\rightsquigarrow_\otimes$ of $\mathcal{ZH}_{\mathcal{A}\otimes\mathcal{A}'}$.
Here, $\rightsquigarrow_\times$ denotes the transition relation 
of $\mathcal{ZH}_{\mathcal{A}\times\mathcal{A}'}$ (\ie{} the zone-history 
graph of $\mathcal{A}\times\mathcal{A}'$), whereas $\rightsquigarrow_1$ 
refers to the transition relation of~$\mathcal{ZH}_\mathcal{A}$.

\begin{definition}\label{def:aux-transitions}
Let $\mathcal{A}$ and $\mathcal{A}'$ be TA, $(\mathcal{Z},z_0,\Sigma,\rightsquigarrow)$ be the zone-history graph of $\mathcal{A}$, and $(\mathcal{Z}',z_0',\Sigma',\rightsquigarrow')$ be the zone-history graph of $\mathcal{A}\times\mathcal{A}'$.
By ${\rightsquigarrow_1}={\rightsquigarrow}$ and ${\rightsquigarrow_\times}={\rightsquigarrow'}$
we denote two \emph{auxiliary transition relations} of $\mathcal{A}$ \wrt{} $\mathcal{A}'$.
\end{definition}

As described above, the parallel product only contains 
behavior being common to $\mathcal{A}$ and $\mathcal{A}'$.
In order to ensure that the composite zone-history graph $\mathcal{ZH}_{\mathcal{A}\otimes\mathcal{A}'}$ 
contains the same behavior as the zone-history graph $\mathcal{ZH}_\mathcal{A}$, 
we further have to add behavior of $\mathcal{A}$ not being enabled in $\mathcal{A}'$ to $\mathcal{ZH}_{\mathcal{A}\otimes\mathcal{A}'}$.

\begin{example}\label{example:motivate-join}
Consider, again, the TA depicted in Fig.~\ref{fig:zone-split} and 
let us assume that the switch of $\mathcal{A}'$ labeled with guard $y\geq2$ would be missing.
In this case, the zone-history graph of $\mathcal{A}\times\mathcal{A}'$ does not contain 
all behavior of $\mathcal{A}$, such that we have to add the missing behavior to ensure that 
the composite zone-history graph of $\mathcal{A}$ \wrt{} $\mathcal{A}'$ is 
bisimilar to the zone-history graph of $\mathcal{A}$.
\end{example}

To identify the behavior of zone-history graph $\mathcal{ZH}_\mathcal{A}$ 
already being contained in $\mathcal{ZH}_{\mathcal{A}\otimes\mathcal{A}'}$, 
we employ the notion of a (bi-)simulation relation.
However, as the (timed) behavior of one transition in $\mathcal{ZH}_\mathcal{A}$ may be 
simulated by a combination of multiple transitions in $\mathcal{ZH}_{\mathcal{A}\otimes\mathcal{A}'}$ labeled with the same action, 
we first have to combine the histories of this set of transitions in $\mathcal{ZH}_{\mathcal{A}\otimes\mathcal{A}'}$.
To this end, we define an operator for \emph{joining} histories.
In particular, we compose sets $\mathfrak{H}$ of histories into
a single one in an incremental manner, where two histories $\mathcal{H}$, $\mathcal{H}'$ 
are combined by element-wise disjunction of their components.

In general, disjunction leads to constraints corresponding to non-convex polyhedra 
(as opposed to convex polyhedra obtained by clock constraints described in Def.~\ref{def:ta}).
As comparing non-convex polyhedra (\eg{} checking if two polyhedra intersect) is less efficient than comparing convex polyhedra,
the construction of a composite zone-history graph is computationally much more complex than constructing zone-history graphs
for deterministic TA.
In fact, those non-convex constraints solely occur during those particular checks determining 
whether additional states must be added.
In contrast, all other constraints emerging during zone-history graph construction,
including those specifying the individual components of histories, always remain convex.
In the following, we first consider the case where $|\mathcal{H}|=|\mathcal{H}'|$.

\begin{definition}[History Join]\label{def:history-join}
Let $\mathfrak{H}\in2^{\mathcal{B}(C)^*}$ be a set of histories.
Function $\join:2^{\mathcal{B}(C)^*}\rightarrow\mathcal{B}(C)^*$ is recursively defined by
\begin{itemize}
  \item $\join(\emptyset)=\epsilon$,
  \item $\join(\{\mathcal{H}\}\cup\mathfrak{H})=\mathcal{H}\dot\lor\join(\mathfrak{H})$,
  \item $H\dot\lor\epsilon=H$, and
  \item $(H\cdot\mathcal{H})\dot\lor(H'\cdot\mathcal{H}')=(H\lor H')\cdot(\mathcal{H}\dot\lor\mathcal{H}')$ if $|\mathcal{H}|=|\mathcal{H}'|$.
\end{itemize}
\end{definition}

In order to join two histories $\mathcal{H}$ and $\mathcal{H}'$ of different length
(\ie{} $|\mathcal{H}|\neq|\mathcal{H}'|$), we expand the shorter history to length
$k=\max(|\mathcal{H}|,|\mathcal{H}'|)$.
To this end, we use the notation $\mathcal{H}\uparrow^k$ to add 
constant elements $\mathsf{false}\in\mathcal{B}(C)$, being the neutral element of disjunction, 
as additional prefixes to $\mathcal{H}$ until $\mathcal{H}$ has length $k$.

\begin{definition}\label{def:history-join-unequal-length}
Let $\mathcal{H},\mathcal{H}'\in\mathcal{B}(C\cup\{\chi\})^*$ be zone histories.
\begin{itemize}
  \item $\mathcal{H}\dot\lor\mathcal{H}'\Leftrightarrow\mathcal{H}\uparrow^k\dot\lor\mathcal{H}'\uparrow^k$ if $|\mathcal{H}|\neq|\mathcal{H}'|$ and $k=\max(|\mathcal{H}|,|\mathcal{H}'|)$,
  \item $\mathcal{H}\uparrow^k=\mathcal{H}$ if $|\mathcal{H}|\geq k$, and
  \item $\mathcal{H}\uparrow^k=\mathsf{false}\cdot\mathcal{H}\uparrow^{k-1}$ if $|\mathcal{H}|<k$.
\end{itemize}
\end{definition}

Next, we define the function \emph{histories} to define composite 
zone-history graphs in a compact way.
Function \emph{histories} takes as input symbolic state $z$, 
symbolic transition relation $\rightsquigarrow$, and action $\mu\in\Sigma_\tau$
and returns the histories of all states being reachable from 
$z$ under action $\mu$.

\begin{definition}\label{def:history-aux}
Let $z\in\mathcal{Z}=L\times\mathcal{B}(C)\times\mathcal{B}(C)^*$ 
be a symbolic state, ${\rightsquigarrow}\in\mathcal{Z}\times\Sigma_\tau\times\mathcal{Z}$ be a symbolic transition relation, and $\mu\in\Sigma_\tau$ be an action.
Function
$$\histories:\mathcal{Z}\times(\mathcal{Z}\times\Sigma_\tau\times\mathcal{Z})\times\Sigma_\tau\rightarrow2^{\mathcal{B}(C)^*}$$
denotes the set of histories $\mathfrak{H}\in2^{\mathcal{B}(C)^*}$ 
being reachable from $z$ with $\mu$, such that $\mathcal{H}'\in\mathfrak{H}$ if $z\xrightsquigarrow{\mu}\angles{\ell',D',\mathcal{H}'}$.
\end{definition}

We are now able to define \emph{composite zone-history graphs} by considering 
the aforementioned transition relations $\rightsquigarrow_1$ and $\rightsquigarrow_\times$ (see Def.~\ref{def:aux-transitions}) 
for TA $\mathcal{A}$ and $\mathcal{A}'$.
For transition relation $\rightsquigarrow_\otimes$ of the composite zone-history 
graph $\mathcal{ZH}_{\mathcal{A}\otimes\mathcal{A}'}$, we require $\rightsquigarrow_\times\subseteq\rightsquigarrow_\otimes$.
In addition, transition $\angles{(\ell_1,\ell_1'),D_1,\mathcal{H}_1}\xrightsquigarrow{\mu}_1\angles{(\ell_2,\ell_1'),D_2,\mathcal{H}_2}$ is also part of $\rightsquigarrow_\otimes$ if the (timed) behavior of this transition is not covered by a transition (or a combination of transitions) in $\rightsquigarrow_\times$ (see Example~\ref{example:motivate-join}).
Hence, $\rightsquigarrow_\otimes$ contains the behavior of the parallel product \emph{and} the behavior exclusive to $\mathcal{A}$.

\begin{definition}[Composite Zone-History Graph]\label{def:composite-zone-history-graph}
Let $\mathcal{A}$, $\mathcal{A}'$ be TA over $\Sigma$ with $C\,\cap\, C' = \emptyset$, $\mathcal{A} \times \mathcal{A}' = \left(L \times L',(\ell_0,\ell_0'),\Sigma,C\cup C', I_{\times}, E_{\times}\right)$ be the parallel product (see Def.~\ref{def:parallel-product}), and $\rightsquigarrow_\times$ and $\rightsquigarrow_1$ be auxiliary transition relations (see Def.~\ref{def:aux-transitions}).
The \emph{composite zone-history graph} $\mathcal{ZH}_{\mathcal{A}\otimes\mathcal{A}'}=(\mathcal{Z},z_0,\Sigma,\rightsquigarrow_{\otimes})$
of $\mathcal{A}$ \wrt{} $\mathcal{A}'$ is a zone-history graph, where
\begin{itemize}
  \item $\mathcal{Z}=(L\times L')\times\mathcal{B}(C\cup C')\times\mathcal{B}(C\cup C'\cup\{\chi\})^*$ is a set of \emph{symbolic states} 
  with \emph{initial state} $z_0=\angles{(\ell_0,\ell_0'),D_0,\epsilon}\in \mathcal{Z}$,
  \item $\Sigma$ is a set of actions and
  \item ${\rightsquigarrow_{\otimes}}\subseteq\mathcal{Z}\times\Sigma_\tau\times\mathcal{Z}$ is the least relation satisfying
  \begin{itemize}
    \item $z\xrightsquigarrow{\mu}_\otimes z'$ if $z\xrightsquigarrow{\mu}_\times z'$ and
    \item $z_1\xrightsquigarrow{\mu}_\otimes\angles{(\ell_2,\ell_1'),D_2,\mathcal{H}_2}$ if $z_1\xrightsquigarrow{\mu}_1\angles{(\ell_2,\ell_1'),D_2,\mathcal{H}_2}\land\join(\mathfrak{H})\prec\mathcal{H}_2$, where $\mathfrak{H}=\histories(z_1,\rightsquigarrow_\times,\mu)$.
  \end{itemize}
\end{itemize}
\end{definition}

This construction allows us to establish a \emph{symbolic version of 
(strong) timed (bi-)\allowbreak simulation} on zone-history graphs such that state $z_1'$ simulates
state $z_1$ if (1) $z_1'$ enables the same actions $\mu\in\Sigma_\tau$ as $z_1$, and
(2) the span of history $\mathcal{H}_1'$ includes the span of $\mathcal{H}_1$, respectively.
Moreover, we have to compare the spans allowed for residing in related states.
As before, we perform this check by introducing a fresh clock $\chi$ and checking the span of $\chi$.
As composite zone-history graphs are, by construction, proper zone-history 
graphs, the following definitions and results are likewise applicable.

\begin{definition}[Symbolic Timed Bisimulation]\label{def:symbolic-timed-bisim}
Let $\mathcal{A}$, $\mathcal{A}'$ be TA over $\Sigma$ 
with $C\cap C'=\emptyset$, $\chi,\chi'\notin  C\cup C'$, symbolic states $\mathcal{Z}$, $\mathcal{Z'}$, and $\mathcal{R}\subseteq\mathcal{Z}\times\mathcal{Z}'$ such that for all $(z_1,z_1')\in\mathcal{R}$
\begin{itemize}
  \item if $z_1\xrightsquigarrow{\mu}z_2$ with $\mu\in\Sigma_\tau$, then $z_1'\xrightsquigarrow{\mu}z_2'$ and $(z_2,z_2')\in\mathcal{R}$ and
  \item $\textit{span}(\chi,(D_1\land\chi=0)^\uparrow\land I(\ell_1))\leq\textit{span}(\chi',(D_1'\land\chi'=0)^\uparrow\land I'(\ell_1'))\land\mathcal{H}_1\preceq \mathcal{H}_1'$.
\end{itemize}
$\mathcal{A}'$ \emph{(strongly) timed simulates} $\mathcal{A}$ iff $(z_0,z_0')\in\mathcal{R}$.
$\mathcal{A}'$ and $\mathcal{A}$ are \emph{(strongly) timed bisimilar}, denoted $\mathcal{A}\simeq\mathcal{A}'$, iff $\mathcal{R}$ is symmetric.
\end{definition}

We overload $\sqsubseteq$ and $\simeq$ on zone-history graphs, accordingly,
and we, again, obtain \emph{weak} versions of those definition as before.
Concerning correctness and decidability of symbolic timed bisimulation on zone-history graphs, we first prove
that the composite zone-history graph is \emph{semantic-preserving} and \emph{finite}.

\begin{proposition}\label{proposition:composite-correctness}
Let $\mathcal{A}$, $\mathcal{A}'$ be TA over $\Sigma$.
Then it holds that (1)~$\mathcal{ZH}_{\mathcal{A}}\simeq\mathcal{ZH}_{\mathcal{A}\otimes\mathcal{A}'}$, 
and (2)~$\mathcal{ZH}_\mathcal{A}$ and $\mathcal{ZH}_{\mathcal{A}\otimes\mathcal{A}'}$ are finite.
\end{proposition}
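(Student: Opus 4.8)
The plan is to prove the two claims of Proposition~\ref{proposition:composite-correctness} separately, starting with finiteness~(2) and then establishing semantic preservation~(1).

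For the finiteness claim, I would first invoke Proposition~\ref{proposition:algorithm-finite}, which already guarantees that the zone-history graph $\mathcal{ZH}_\mathcal{A}$ of a single TA is finite (via $k$-normalization of zones plus the history-cutting performed by Algorithm~\ref{algorithm:finite-zhg} using $\asymp_\circlearrowleft$). It then remains to argue that the \emph{composite} graph $\mathcal{ZH}_{\mathcal{A}\otimes\mathcal{A}'}$ is finite as well. The key observation is that by Definition~\ref{def:composite-zone-history-graph}, every transition of $\rightsquigarrow_\otimes$ comes either from $\rightsquigarrow_\times$ (the transition relation of $\mathcal{ZH}_{\mathcal{A}\times\mathcal{A}'}$) or from $\rightsquigarrow_1$ (the relation of $\mathcal{ZH}_\mathcal{A}$). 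Both of these underlying zone-history graphs are finite by Proposition~\ref{proposition:algorithm-finite}, since the parallel product $\mathcal{A}\times\mathcal{A}'$ (Definition~\ref{def:parallel-product}) is itself a TA with the finite location set $L\times L'$ and finite clock set $C\cup C'$, and thus Algorithm~\ref{algorithm:finite-zhg} applies verbatim to bound both its zones and its histories. Since the state set $\mathcal{Z}$ of the composite graph is a subset of the (finite) union of states reachable via these two relations, and the added $\rightsquigarrow_1$-edges reuse existing target states $\angles{(\ell_2,\ell_1'),D_2,\mathcal{H}_2}$ rather than introducing fresh unbounded histories, $\mathcal{ZH}_{\mathcal{A}\otimes\mathcal{A}'}$ has finitely many states and transitions.

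For the semantic-preservation claim~(1), $\mathcal{ZH}_{\mathcal{A}}\simeq\mathcal{ZH}_{\mathcal{A}\otimes\mathcal{A}'}$, I would exhibit an explicit (symbolic timed) bisimulation relation $\mathcal{R}$ between the two graphs and verify the two conditions of Definition~\ref{def:symbolic-timed-bisim} in both directions. The natural candidate is the relation pairing each state $\angles{\ell,D,\mathcal{H}}$ of $\mathcal{ZH}_\mathcal{A}$ with the composite states $\angles{(\ell,\ell'),D_\otimes,\mathcal{H}_\otimes}$ that project onto it in the first component. To show that every transition of $\mathcal{ZH}_\mathcal{A}$ is matched in $\mathcal{ZH}_{\mathcal{A}\otimes\mathcal{A}'}$, I would do a case distinction on the two clauses of Definition~\ref{def:composite-zone-history-graph}: a transition of $\mathcal{ZH}_\mathcal{A}$ whose behavior is already covered by transitions of $\rightsquigarrow_\times$ is simulated by the corresponding product transition(s), whereas a transition whose span is \emph{not} covered, i.e.\ where $\join(\mathfrak{H})\prec\mathcal{H}_2$ with $\mathfrak{H}=\histories(z_1,\rightsquigarrow_\times,\mu)$, is explicitly added to $\rightsquigarrow_\otimes$ by the second clause and hence matched directly. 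The converse direction (every composite transition is matched in $\mathcal{ZH}_\mathcal{A}$) follows because both $\rightsquigarrow_\times$ and $\rightsquigarrow_1$ only produce behavior present in $\mathcal{A}$: product transitions synchronize $\mathcal{A}$ with $\mathcal{A}'$ and thus project to genuine $\mathcal{A}$-transitions, and $\rightsquigarrow_1$-transitions are $\mathcal{A}$-transitions by Definition~\ref{def:aux-transitions}.

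The main obstacle I expect lies in verifying the span/history condition of Definition~\ref{def:symbolic-timed-bisim} precisely when a single transition of $\mathcal{ZH}_\mathcal{A}$ is simulated by a \emph{combination} of several overlapping product transitions (the state-splitting phenomenon of Figure~\ref{fig:zone-split}). Here the disjunctive join operator $\join$ (Definitions~\ref{def:history-join} and~\ref{def:history-join-unequal-length}) is essential: I must show that the element-wise disjunction $\join(\mathfrak{H})$ of the histories of the covering transitions yields a span of the additional clock $\chi$ that exactly reconstructs the span of the original $\mathcal{A}$-transition, so that the condition $\mathcal{H}_1\preceq\mathcal{H}_1'$ and the delay-span inclusion both hold. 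The delicate point is that these disjunctions produce non-convex constraints, so the argument must reason about spans of $\chi$ over unions of zones rather than over single convex zones; I would argue that, because the guards being split together exactly partition the original guard's span, the join's $\chi$-span coincides with the original span, and the strictness condition $\join(\mathfrak{H})\prec\mathcal{H}_2$ in Definition~\ref{def:composite-zone-history-graph} correctly triggers the addition of exactly the missing behavior and nothing more. Carrying this coverage argument through all history positions simultaneously, using the definitions of $\preceq$ on histories (Definition~\ref{def:zh-comparison}), is the technically heaviest step.
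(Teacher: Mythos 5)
Your proposal follows essentially the same route as the paper's proof: finiteness of both graphs is reduced to Proposition~\ref{proposition:algorithm-finite} (applied to $\mathcal{A}$ and to the parallel product $\mathcal{A}\times\mathcal{A}'$, which is itself a TA with finite location and clock sets), and semantic preservation rests on the observation that $\rightsquigarrow_\otimes$ consists of $\rightsquigarrow_\times$ plus exactly those $\rightsquigarrow_1$-transitions whose behavior is not already covered, as enforced by the condition $\join(\mathfrak{H})\prec\mathcal{H}_2$. If anything, you are more explicit than the paper, which asserts that the bisimulation ``directly follows'' without constructing the relation or spelling out the non-convex join-coverage step that you correctly single out as the technically delicate point.
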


\begin{proof}
Let $\mathcal{A}$ and $\mathcal{A}'$ be TA over $\Sigma$.
We prove the two parts of Proposition~\ref{proposition:composite-correctness} separately.
\begin{enumerate}
\item By definition, $\mathcal{ZH}_{\mathcal{A}\times\mathcal{A}'}$ contains exactly the shared behavior of $\mathcal{ZH}_\mathcal{A}$ and $\mathcal{ZH}_{\mathcal{A}'}$ (see relation $\rightsquigarrow_\times$ of Def.~\ref{def:aux-transitions}).
In addition, the remaining behavior being exclusive to $\mathcal{ZH}_\mathcal{A}$ is added by relation $\rightsquigarrow_1$ as $\rightsquigarrow_1$ contains exactly the behavior of $\mathcal{A}$.
Furthermore, the requirement $\join(\mathfrak{H})\prec\mathcal{H}_2$ ensures that transitions of $\rightsquigarrow_1$ are added to $\rightsquigarrow_\otimes$ if and only if the respective behavior is not already contained in $\rightsquigarrow_\otimes$ through $\rightsquigarrow_\times$ (see Def.~\ref{def:composite-zone-history-graph}).
Hence, it directly follows that $\mathcal{ZH}_\mathcal{A}\simeq\mathcal{ZH}_{\mathcal{A}\times\mathcal{A}'}$ due to $\mathcal{A}\simeq\mathcal{A}'$ as shown with bisimilarity of the corresponding TLTS.
\item Finiteness of $\mathcal{ZH}_\mathcal{A}$ has already been proven in Proposition~\ref{proposition:algorithm-finite}.
Hence, it remains to be shown that also $\mathcal{ZH}_{\mathcal{A}\otimes\mathcal{A}'}$ is finite.
For the construction of $\mathcal{ZH}_{\mathcal{A}\otimes\mathcal{A}'}$, we first generate the zone-history graphs of $\mathcal{A}$ as well as $\mathcal{A}\times\mathcal{A}'$, which are finite (see Proposition~\ref{proposition:algorithm-finite}).
To obtain $\mathcal{ZH}_{\mathcal{A}\otimes\mathcal{A}'}$, we then add transitions from $\mathcal{ZH}_\mathcal{A}$ to $\mathcal{ZH}_{\mathcal{A}\otimes\mathcal{A}'}$ iff behavior of $\mathcal{A}$ is uncovered.
As the zone-history graphs of $\mathcal{A}$ as well as $\mathcal{A}\times\mathcal{A}'$ are finite, also $\mathcal{ZH}_{\mathcal{A}\otimes\mathcal{A}'}$ is finite.
\qedhere
\end{enumerate}
\end{proof}

Thereupon, we are now able to show correctness of symbolic timed (bi-)simulation.

\begin{theorem}\label{theorem:hzg-bisim-correctness}
Let $\mathcal{A}$, $\mathcal{A}'$ be TA over $\Sigma$.
Then it holds that (1) $\mathcal{A}\sqsubseteq\mathcal{A}' \Leftrightarrow 
\mathcal{ZH}_{\mathcal{A}\otimes\mathcal{A}'} \sqsubseteq \mathcal{ZH}_{\mathcal{A'}\otimes\mathcal{A}}$,
and (2) $\mathcal{ZH}_{\mathcal{A}\otimes\mathcal{A}'} \sqsubseteq \mathcal{ZH}_{\mathcal{A'}\otimes\mathcal{A}}$
is decidable.
\end{theorem}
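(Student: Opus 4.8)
The plan is to establish the correctness equivalence~(1) by a soundness-and-completeness argument relating the concrete timed simulation on TLTS (Definition~\ref{def:ta-bisim}) to the symbolic timed simulation on composite zone-history graphs (Definition~\ref{def:symbolic-timed-bisim}), and then to obtain decidability~(2) from finiteness. Throughout, I would use Proposition~\ref{proposition:composite-correctness}(1) as the semantic bridge, since it guarantees that each composite graph preserves the behavior of its underlying automaton ($\mathcal{ZH}_\mathcal{A}\simeq\mathcal{ZH}_{\mathcal{A}\otimes\mathcal{A}'}$). For the soundness direction ($\Leftarrow$) of~(1), I would start from a symbolic simulation relation $\mathcal{R}$ witnessing $\mathcal{ZH}_{\mathcal{A}\otimes\mathcal{A}'}\sqsubseteq\mathcal{ZH}_{\mathcal{A}'\otimes\mathcal{A}}$ and construct a concrete relation on $S\times S'$ by relating TLTS states $\angles{\ell,u}$ and $\angles{\ell',u'}$ whenever they lie within symbolic states related by $\mathcal{R}$ and their valuations are compatible with the recorded histories. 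The two clauses of Definition~\ref{def:ta-bisim} would then be discharged: action steps are matched because the transition rule of Definition~\ref{def:zone-history-graph} mirrors switches one-to-one, while delay steps are matched because the length comparison $\textit{span}(\chi,\cdot)\leq\textit{span}(\chi',\cdot)$ bounds the admissible delay in the current state and the containment $\mathcal{H}_1\preceq\mathcal{H}_1'$ (Definition~\ref{def:zh-comparison}), via the never-reset clock $\chi$ of Definition~\ref{def:zone-history}, records exactly the admissible delay intervals inherited through non-reset clocks.

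The forward direction ($\Rightarrow$, completeness) is the main obstacle, and it is precisely here that the composite construction is indispensable rather than merely convenient. Starting from a concrete simulation witnessing $\mathcal{A}\sqsubseteq\mathcal{A}'$, I would build a symbolic simulation on the composites. The difficulty is that a single symbolic transition of $\mathcal{ZH}_\mathcal{A}$ may have to be matched by several transitions of $\mathcal{ZH}_{\mathcal{A}'}$ whose guard spans overlap (the non-determinism phenomenon of Figure~\ref{fig:zone-split}), whereas Definition~\ref{def:symbolic-timed-bisim} permits only one-to-one matching with span containment. I would argue that the state splitting induced by the parallel product (Definition~\ref{def:parallel-product}) refines each such transition exactly into the pieces aligned with the switches of $\mathcal{A}'$, while the condition $\join(\mathfrak{H})\prec\mathcal{H}_2$ of Definition~\ref{def:composite-zone-history-graph} re-adds precisely the residual $\mathcal{A}$-behavior left uncovered by the product, so that one-to-one symbolic matching becomes possible without altering the represented behavior. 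The crux is then an induction on run length showing that the accumulated histories satisfy $\mathcal{H}_1\preceq\mathcal{H}_1'$ if and only if every concrete delay sequence realizable on the $\mathcal{A}$-side is realizable on the $\mathcal{A}'$-side; this is exactly the discriminating information that plain zones lose and that the history component restores, as illustrated by the reset-history example of Figure~\ref{fig:reset-history}. Combining both directions and transferring along Proposition~\ref{proposition:composite-correctness}(1) yields~(1).

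For part~(2), decidability follows once finiteness is in place. By Proposition~\ref{proposition:composite-correctness}(2) both $\mathcal{ZH}_{\mathcal{A}\otimes\mathcal{A}'}$ and $\mathcal{ZH}_{\mathcal{A}'\otimes\mathcal{A}}$ are finite, so it remains to observe that the per-pair conditions of Definition~\ref{def:symbolic-timed-bisim} are effectively checkable: action matching ranges over finitely many outgoing transitions, the length comparison $\leq$ reduces to computing the $\chi$-span of a (normalized) zone, and the containment $\mathcal{H}_1\preceq\mathcal{H}_1'$ is a finite element-wise comparison of $\chi$-spans over histories of bounded length (Definitions~\ref{def:zh-comparison} and~\ref{def:zh-periodic-comparison}). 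Consequently the largest symbolic simulation relation can be computed as a greatest fixpoint over the finite product state space $\mathcal{Z}\times\mathcal{Z}'$ by standard iterative refinement, which terminates; hence $\mathcal{ZH}_{\mathcal{A}\otimes\mathcal{A}'}\sqsubseteq\mathcal{ZH}_{\mathcal{A}'\otimes\mathcal{A}}$ is decidable, and the corresponding bisimilarity is obtained by additionally requiring $\mathcal{R}$ to be symmetric. I expect the induction in the completeness direction, reconciling the one-to-one symbolic matching enabled by state splitting with the span-containment bookkeeping of histories, to be the technically heaviest step.
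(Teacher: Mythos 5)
Your plan is sound and reaches the right conclusion, but it takes a genuinely different route from the paper. The paper's argument for part~(1) is a correctness-by-construction argument: it observes that $\rightsquigarrow_\times$ and $\rightsquigarrow_\times'$ coincide up to renaming, reduces the claim (w.l.o.g.) to showing that transitions added via $\rightsquigarrow_1$ --- which by the condition $\join(\mathfrak{H})\prec\mathcal{H}_2$ are added exactly when the behavior is exclusive to $\mathcal{A}$ --- cannot be simulated on the other side, appeals to the fresh clock $\chi$ to argue that information hidden by clock resets remains observable in the histories, and closes with a remark that $k$-normalization does not perturb the history comparison. You instead propose the classical two-directional transfer: constructing a concrete TLTS simulation from a symbolic one (soundness) and a symbolic simulation from a concrete one (completeness), with an induction on run length establishing that $\mathcal{H}_1\preceq\mathcal{H}_1'$ characterizes realizability of delay sequences, and an explicit greatest-fixpoint computation for decidability. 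Your decomposition is more standard and, if carried out, would arguably be more rigorous than the paper's sketch; you also correctly identify the non-determinism/state-splitting issue and the role of $\join$ as the crux of completeness, which matches the paper's intent. What your plan omits, and what the paper explicitly (if briefly) addresses, is the passage from the infinite zone-history graph to the finite one actually used: the objects in the theorem are the graphs after $k$-normalization and after the pruning of Algorithm~\ref{algorithm:finite-zhg}, whose cut criterion $\asymp_\circlearrowleft$ compares only bounded postfixes of histories. Your induction on run length lives naturally on the unpruned graph, so you would additionally need to argue that the cut-equality identification of states preserves exactly the history information your induction relies on; without that step the completeness direction does not yet apply to the finite graphs on which decidability is claimed.
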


\begin{proof}
Let $\mathcal{A}$, $\mathcal{A}'$ be TA over $\Sigma$.
We prove the two parts of Theorem~\ref{theorem:hzg-bisim-correctness} separately.
\begin{enumerate}
\item It holds, by construction of composite zone-history graphs, that $\rightsquigarrow_\times=\rightsquigarrow_\times'$ up to renaming of locations and clocks (see Def.~\ref{def:composite-zone-history-graph}).
Hence, \Wlog{}, we have to show that behavior in $\rightsquigarrow_1$ (\ie{} being exclusive to $\mathcal{ZH}_{\mathcal{A}\times\mathcal{A}'}$) cannot be simulated by $\mathcal{ZH}_{\mathcal{A}'\times\mathcal{A}}$.
This follows directly from the first condition of Def.~\ref{def:symbolic-timed-bisim} and the fact that transitions are added to $\rightsquigarrow_1$ iff the corresponding behavior is exclusive (see second rule for $\rightsquigarrow_\otimes$ in Def.~\ref{def:composite-zone-history-graph}).
Furthermore, exclusive behavior of $\mathcal{A}$ cannot be simulated by $\mathcal{A}'$ when considering timed bisimulation on TLTS (see Defs.~\ref{def:ta}, \ref{def:tlts-semantics}, and~\ref{def:ta-bisim}).
Finally, we have to consider that clock resets hide clock constraints in the sense that a clock constraint $x\sim n$ is not visible in a zone after $x$ is reset.
However, by comparing zone histories $\mathcal{H}$ and $\mathcal{H}'$, we ensure that the impact of previous clock constraints remains observable by using the fresh clock $\chi$ for tracking respective changes to clock differences including those potentially being hidden by subsequent clock resets.
Therefore, it holds that $\mathcal{A}\sqsubseteq\mathcal{A}' \Leftrightarrow \mathcal{ZH}_{\mathcal{A}\times\mathcal{A}'} \sqsubseteq \mathcal{ZH}_{\mathcal{A'}\times\mathcal{A}}$.
Note, that $k$-normalization does not impact the bisimilarity check as checking bisimilarity relies on the comparison of histories.
In particular, loops (being the reason for $k$-normalization) result in the comparison of the postfix of length $n=\min(|\mathcal{H}|,|\mathcal{H}'|,|\omega|)$ with $\omega=|\mathcal{H}|-|\mathcal{H}'|$ of histories $\mathcal{H}$ and $\mathcal{H}'$ (see Def.~\ref{def:zone-history}).
As a result, we only compare the newest $n$ elements of a history when unrolling a loop, where $n$ is the number of locations on the loop.
Therefore, the history eventually becomes regular as we only compare the postfix of length $n$, such that we do not apply any approximation to histories.
\item As composite zone-history graphs are finite (see Proposition~\ref{proposition:composite-correctness}), there are finitely many transitions and spans to check (see Def.~\ref{def:symbolic-timed-bisim}).
Hence, $\mathcal{ZH}_{\mathcal{A}\times\mathcal{A}'} \sqsubseteq \mathcal{ZH}_{\mathcal{A'}\times\mathcal{A}}$ is decidable.
\qedhere
\end{enumerate}
\end{proof}

\begin{example}\label{example:zh-graph}
The extract from the zone-history graphs in Fig.~\ref{fig:zone-graph-example} 
correspond to the TA in Fig.~\ref{fig:ta-example}.
Starting from the initial state of \emph{coffee machine} 
(\cf{}~Fig.~\ref{fig:zone-graph-example-machine}) with zone $x=0$, the zone of the
subsequent state is $x=0$ due to the reset, whereas 
the following state has zone $x=1$ due to the invariant of location \emph{Warm Up} and the guard of switch \emph{coffee}.
Additionally, $\mathcal{H}_1=(x=0\land\chi\geq x)$ as $x$ is reset, and 
$\mathcal{H}_2=(x=1\land\chi\geq x)\cdot(x=1\land\chi\geq x)$ due to the guard and invariant.
All elements of $\mathcal{H}_1'$ and $\mathcal{H}_2'$ equal $(y=0\land\chi\geq y)$
while $\mathcal{H}_3'=(y\geq0\land\chi\geq y)\cdot(y\geq0\land\chi\geq y)\cdot(y\geq0\land\chi=y)$.
Hence, both TA are not \emph{strongly} but \emph{weakly} bisimilar as, \eg{} $\mathcal{H}_1\preceq\mathcal{H}_1'$ 
and $\mathcal{H}_1'\preceq\mathcal{H}_1$ (as $\emph{span}(\chi,x=0\land\chi\geq x)=\emph{span}(\chi,y=0\land\chi\geq y)=(0,\infty)$).
Furthermore, TA in Fig.~\ref{fig:zone-graph-example-machine-prime} 
may immediately produce \emph{sugar} after action \emph{coffee} due to silent steps.
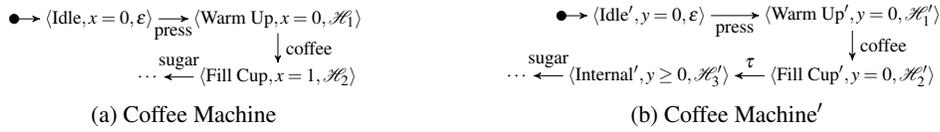
\begin{figure}[tp]
  \hfill
  \subfloat[Coffee Machine]{\label{fig:zone-graph-example-machine}

\scalebox{\scalefactor}{
\begin{tikzpicture}[node distance=.5]

\node[labeledstate, initial, initial where=left] (idlezero) {$\angles{\text{Idle},x=0,\epsilon}$};
\node[labeledstate, right=of idlezero] (warmupzero) {$\angles{\text{Warm Up},x=0,\mathcal{H}_1}$};
\node[labeledstate, below=of warmupzero] (fillcupone) {$\angles{\text{Fill Cup},x=1,\mathcal{H}_2}$};
\node[labeledstate, left=of fillcupone] (empty) {$\cdots$};

\draw[transition] (idlezero) to node[auto, swap]{press} (warmupzero);
\draw[transition] (warmupzero) to node[auto]{coffee} (fillcupone);
\draw[transition] (fillcupone) to node[auto, swap]{sugar} (empty);

\end{tikzpicture}
}}
  \hfill
  \subfloat[Coffee Machine$'$]{\label{fig:zone-graph-example-machine-prime}

\scalebox{\scalefactor}{
\begin{tikzpicture}[node distance=.5]

\node[labeledstate] (warmupzero) {$\angles{\text{Warm Up}',y=0,\mathcal{H}_1'}$};
\node[labeledstate, below=of warmupzero] (fillcupzero) {$\angles{\text{Fill Cup}',y=0,\mathcal{H}_2'}$};
\node[labeledstate, left=of fillcupzero] (internalzero) {$\angles{\text{Internal}',y\geq0,\mathcal{H}_3'}$};
\node[labeledstate, above=of internalzero, initial, initial where=left] (idlezero) {$\angles{\text{Idle}',y=0,\epsilon}$};
\node[labeledstate, left=of internalzero] (empty) {$\cdots$};

\draw[transition] (idlezero) to node[auto, swap]{press} (warmupzero);
\draw[transition] (warmupzero) to node[auto]{coffee} (fillcupzero);
\draw[transition] (fillcupzero) to node[auto, swap]{$\tau$} (internalzero);
\draw[transition] (internalzero) to node[auto, swap]{sugar} (empty);

\end{tikzpicture}
}}
  \hfill\strut
  \caption{Zone-History Graphs for TA Depicted in Fig.~\ref{fig:ta-example}}\label{fig:zone-graph-example}
\end{figure}
\end{example}

As shown in Proposition~\ref{proposition:composite-correctness}, zone-history
graphs are finite and allow for precise checking of timed bisimilarity.
However, in case of larger TA models with many locations and clocks, complex clock constraints 
and frequent clock resets, zone-histories graphs may become very large thus obstructing
effective timed bisimilarity-checking by practical tools.
To also handle realistic models, we next define 
bounded zone-history graphs to enable potentially imprecise, yet arbitrarily scalable
timed bisimilarity-checking.

\subsection{Bounded Zone-History Graphs}

For controlling the size of zone-history graphs,
we introduce a \emph{bound parameter} $b\in \mathbb{N}_{0}$
restricting each history sequence $\mathcal{H}$
produced by the \emph{update}-operator (Def.~\ref{def:zone-history})
during zone-history graph construction to $\mathcal{H}\downarrow_{b}$
(\ie{} memorizing a maximum number of $b$ previous history elements).
By $\mathcal{A}\simeq_{b}\mathcal{A}'$, we denote that 
the $b$-bounded zone-history graphs of TA $\mathcal{A}$ and $\mathcal{A}'$
are timed bisimilar. 
Hence, $\mathcal{A}\simeq_{\infty}\mathcal{A}'$
denotes the unbounded case being equivalent to $\mathcal{A}\simeq\mathcal{A}'$, whereas
$\mathcal{A}\simeq_{0}\mathcal{A}'$ denotes timed bisimilarity-checking
on plain zone graphs according to Def.~\ref{def:zone-graph}.

\begin{theorem}\label{theorem:bzhg-bisim}
Let $\mathcal{A},\mathcal{A}'$ be TA over $\Sigma$.
\begin{enumerate}
  \item There exists $b<\infty$ such that $\mathcal{A}\simeq_{b}\mathcal{A}'\Leftrightarrow\mathcal{A}\simeq\mathcal{A}'$.
  \item $\mathcal{A}\simeq_{b}\mathcal{A}'\Rightarrow\mathcal{A}\simeq_{b'}\mathcal{A}'$ iff $b\geq b'$.
\end{enumerate}
\end{theorem}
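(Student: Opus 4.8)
The plan is to prove the two parts separately, leaning on the finiteness results already established.

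For part~(1), I would argue that a finite bound suffices precisely because the \emph{unbounded} composite zone-history graphs are already finite. By Proposition~\ref{proposition:composite-correctness}, both $\mathcal{ZH}_{\mathcal{A}\otimes\mathcal{A}'}$ and $\mathcal{ZH}_{\mathcal{A}'\otimes\mathcal{A}}$ contain finitely many symbolic states, so there is a finite maximum $b^{*}$ over the lengths $|\mathcal{H}|$ of all histories occurring in these graphs. For every $b\geq b^{*}$ the truncation $\mathcal{H}\downarrow_{b}=\mathcal{H}$ is vacuous on every history that arises, so constructing with bound $b$ yields exactly the same graphs (states, transitions, and histories) as the unbounded construction of Definition~\ref{def:zone-history-graph} pruned by Algorithm~\ref{algorithm:finite-zhg}. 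Consequently the symbolic bisimilarity check of Definition~\ref{def:symbolic-timed-bisim} is literally the same computation, giving $\mathcal{A}\simeq_{b}\mathcal{A}'\Leftrightarrow\mathcal{A}\simeq_{\infty}\mathcal{A}'\Leftrightarrow\mathcal{A}\simeq\mathcal{A}'$. The only point requiring care is that bounding during \emph{construction} never creates a history longer than $b^{*}$ that would then be truncated; since histories grow by one element per transition and are cut by Algorithm~\ref{algorithm:finite-zhg} exactly when a state would exceed those of the finite graph, no truncation occurs once $b\geq b^{*}$.

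For the forward direction of part~(2), namely $b\geq b'\Rightarrow(\mathcal{A}\simeq_{b}\mathcal{A}'\Rightarrow\mathcal{A}\simeq_{b'}\mathcal{A}')$, I would reason on the \emph{infinite} unrolling of the zone-history graph rather than on the pruned finite one, so that the underlying transition structure (locations, zones, and enabled actions) is independent of the bound and only the history component depends on $b$. The key observation is that the bound enters the symbolic bisimulation of Definition~\ref{def:symbolic-timed-bisim} solely through the history-span condition $\mathcal{H}_{1}\downarrow_{b}\preceq\mathcal{H}_{1}'\downarrow_{b}$, and that this condition is \emph{monotone} under further truncation: since $\preceq$ (Definition~\ref{def:zh-comparison}) is an element-wise comparison of the spans of $\chi$ on the common postfix, and since $\mathcal{H}\downarrow_{b'}$ is a suffix of $\mathcal{H}\downarrow_{b}$ whenever $b'\leq b$, every element comparison required at bound $b'$ already appears among those verified at bound $b$. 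Hence any relation $\mathcal{R}$ witnessing $\mathcal{A}\simeq_{b}\mathcal{A}'$ satisfies all the (weaker) conditions at bound $b'$ as well, and is therefore also a witness for $\mathcal{A}\simeq_{b'}\mathcal{A}'$. For the backward direction I would argue the contrapositive: whenever $b<b'$, the implication must fail for \emph{some} pair, so I would exhibit a parametric counterexample family generalizing the false positive of Figure~\ref{fig:reset-history}. Concretely, I would chain $b'$ switches so that one automaton accumulates a timing difference across the whole chain (via a clock never reset) while the other hides it behind a reset at every step, arranging guards and invariants so that the per-step spans of $\chi$ coincide in both automata. Then every postfix of length at most $b<b'$ compares equal, yielding $\mathcal{A}\simeq_{b}\mathcal{A}'$, whereas the full length-$b'$ history exposes the cumulative discrepancy and violates the $\preceq$-condition, yielding $\mathcal{A}\not\simeq_{b'}\mathcal{A}'$.

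The main obstacle I anticipate is twofold. In the forward direction of part~(2), the delicate point is that bounding interacts with the loop-cut operator $\asymp_{\circlearrowleft}$ of Definition~\ref{def:zh-periodic-comparison} used in Algorithm~\ref{algorithm:finite-zhg}, so that different bounds can produce structurally different \emph{finite} graphs; my plan sidesteps this by transferring the bisimulation to the bound-independent infinite unrolling and invoking monotonicity of the comparison there, but one must justify that a bisimulation on the finite pruned graph lifts to one on the unrolling, which should follow because the back-edges introduced by Algorithm~\ref{algorithm:finite-zhg} target $\asymp_{\circlearrowleft}$-equivalent states. In the backward direction, the delicate part is engineering the counterexample so that the per-step spans match \emph{exactly} (guaranteeing $\simeq_{b}$) while the length-$b'$ cumulative span genuinely differs (guaranteeing $\not\simeq_{b'}$); verifying both simultaneously for arbitrary $b<b'$ is the real work.
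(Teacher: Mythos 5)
Your proposal is correct and, for part~(1) and the forward half of part~(2), follows essentially the same route as the paper: part~(1) picks $b$ to be the maximum history length occurring in the (finite, by Proposition~\ref{proposition:composite-correctness}) composite zone-history graphs so that truncation becomes vacuous, and the forward half of part~(2) observes that the $\preceq$-comparison at bound $b'\leq b$ checks a subset of the element-wise span comparisons checked at bound $b$, so a witness relation for $\simeq_b$ remains a witness for $\simeq_{b'}$. You are in fact more careful than the paper on two points. First, you explicitly flag that different bounds can yield structurally different pruned graphs because Algorithm~\ref{algorithm:finite-zhg}'s cut criterion $\asymp_\circlearrowleft$ depends on the history component, and you propose transferring the argument to the bound-independent unrolling; the paper's proof silently ignores this interaction. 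Second, for the ``only if'' half of part~(2) the paper offers only the one-line remark that detecting non-bisimilarity ``would require an element in the zone history to be unequal,'' whereas you propose an explicit parametric counterexample family chaining $b'$ switches (generalizing Figure~\ref{fig:reset-history}, which is exactly the paper's own $b=1$ versus $b=2$ instance discussed in the example following the theorem). That construction is the genuinely new content in your plan and, as you note, verifying that the per-step spans coincide while the cumulative length-$b'$ comparison fails is the real remaining work; the paper does not carry it out either, so completing it would strengthen the result rather than merely reproduce it.
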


\begin{proof}
We prove (1) and (2) separately.
\begin{enumerate}
\item As $\mathcal{A}\simeq\mathcal{A}'$ is decidable (see Theorem~\ref{theorem:hzg-bisim-correctness}) and (composite) zone-history graphs have a finite length (see Proposition~\ref{proposition:composite-correctness}), the length of the respective zone history is finite.
Hence, there exists $b<\infty$ where $b$ may have the length of the longest zone history when computing $\mathcal{A}\simeq\mathcal{A}'$.

\item If it holds that $\mathcal{A}\simeq_{b}\mathcal{A}'$, then it also holds that $\mathcal{A}\simeq_{b'}\mathcal{A}'$ iff $b\geq b'$ as $\mathcal{A}\simeq_{b'}\mathcal{A}'$ considers a shorter history (where the leading elements of the history are equal to considering $b$).
Here, recognizing a TA $\mathcal{A}'$ as not bisimilar would require an element in the zone history to be unequal.
\qedhere
\end{enumerate}
\end{proof}

However, identifying a minimal, yet sufficiently large $b$ meeting the first property a-priori is
not obvious.
In contrast, if $\mathcal{A}\not\simeq_{b}\mathcal{A}'$ holds for some $b$, 
then $\mathcal{A}\simeq\mathcal{A}'$ does also not hold, whereas
$\mathcal{A}\simeq_{b}\mathcal{A}'$ may be false positive only if
histories exceed bound $b$ at least once during zone-history-graph construction.

\begin{example}
Let us assume $b=1$ in Fig.~\ref{fig:reset-history}.
Here, the history contains the constraint $\chi\leq2$ for the states of $\mathcal{ZH}$ and $\mathcal{ZH}'$ comprising $\ell_1$ and $\ell_1'$, respectively.
In states containing $\ell_2$ and $\ell_2'$, respectively, we have $\chi\leq5$ on both sides 
as we only consider the tailing history elements due to $b=1$.
Hence, $\mathcal{A}\simeq_{1}\mathcal{A}'$.
In contrast, $b\geq2$ yields the correct result $\mathcal{A}\not\simeq_{b}\mathcal{A}'$ as we also 
consider the differing first history elements $\chi\leq5$ and $\chi\leq7$ of the states containing $\ell_2$ and $\ell_2'$
thus revealing the effect of the reset of $z$ in $\mathcal{A}'$.
\end{example}

\section{Implementation}\label{section:implementation}

We implemented the concepts for checking (weak and strong) timed bisimilarity
as described in the previous section which we will describe
in more detail in the following.

Our tool is called \timbrcheck{} (\textbf{tim}ed \textbf{b}isimila\textbf{r}ity \textbf{check}er) and
uses \uppaal{}~\cite{Larsen1997}, a widely used 
tool environment for TA modeling and analysis, as a front-end.
To this end, \timbrcheck{} supports the \uppaal{} file format for input TA models.
After parsing two given input TA models, our tool generates (bounded) zone-history 
graphs for a predefined bound value $b$ and performs a timed-bisimilarity check 
between both models. 
Our tool also supports input models having non-deterministic behavior as well as $\tau$-transitions 
by constructing the corresponding composite (bounded) zone-history graphs as described above.

Internally, \timbrcheck{} utilizes difference bound matrices 
(DBM)~\cite{Bellman1957,Dill1989,Bengtsson2003} 
as a common data structure to represent 
and manipulate zones and zone histories.
Unfortunately, DBM can only represent constraints corresponding to
\emph{convex} polyhedra (\ie{} clock constraints described by the grammar in Def.~\ref{def:ta}).
As a consequence, operations on DBM do not include union (or disjunctive constraints, respectively) which is, however, 
required for joining histories during the construction of composite zone-history graphs (see Def.~\ref*{def:history-join}).
Hence, for this particular step during the construction 
of composite zone-history graphs (\ie{} the last bullet point in Def.~\ref{def:composite-zone-history-graph}), 
we make use of an external call to an ILP-solver.
These additional calls may drastically impact the overall performance
of timed bisimilarity checking in case of non-deterministic TA as compared to deterministic
models (cf. Section~\ref{section:evaluation}).
These checks are conducted as follows:
Given a history $\mathcal{H}$ of TA $\mathcal{A}$ 
and a joint history $\mathcal{H}'$ of the respective composite zone-history graph, 
we have to check whether $\mathcal{H}$ is \emph{included} in $\mathcal{H}'$.
To this end, we consider an element-wise conjunction of the respective histories $\mathcal{H}$ and $\mathcal{H}'$, 
where we negate the elements of $\mathcal{H}'$, and then check the resulting 
conjunction for satisfiability.
For instance, if $H$ is the first element of $\mathcal{H}$ and $H'$ the first element of $\mathcal{H}'$,
we check if $H\land\neg H'$ is satisfiable.
If this is the case, then the behavior of $\mathcal{ZH}_\mathcal{A}$ is not yet 
completely included in $\mathcal{ZH}_{\mathcal{A}\times\mathcal{A}'}$, 
and we need to add the respective transition of $\mathcal{ZH}_\mathcal{A}$ to $\mathcal{ZH}_{\mathcal{A}\times\mathcal{A}'}$.
However, if $H\land\neg H'$ is not satisfiable, then the behavior of $\mathcal{ZH}_\mathcal{A}$ is already included 
in $\mathcal{ZH}_{\mathcal{A}\times\mathcal{A}'}$.
In our implementation, we utilize IBM ILOG CPLEX for these checks~\cite{CPLEX2015}.

In contrast to the theoretical constructions described in the previous section,
\timbrcheck{} is obviously not able to first construct a (potentially) infinite zone-history graph
before applying Algorithm~\ref{algorithm:finite-zhg} for pruning it to a finite zone-history graph.
Instead, we incrementally interleave Algorithm~\ref{algorithm:finite-zhg} with zone-history graph construction
in order to perform on-the-fly pruning.
To this end, we apply the check in line~\ref*{line:if-start} whenever a new state is potentially
added to the zone-history graph.

Our tool implementation can be used to conduct experimental timed bisimilarity checking
using different bound values $b$ as will be described in the next section.

\section{Experimental Evaluation}\label{section:evaluation}

In this section, we present experimental results 
gained from applying our tool implementation (see Section~\ref{section:implementation}) 
of the previously presented technique to a collection of TA models.
In particular, we consider the following research questions.

\paragraph{Research Questions.}
Our tool \timbrcheck{} allows us to investigate
the impact of parameter $b$ (see Sect.~\ref{section:timed-bisimulation}) 
on efficiency and precision of timed-bisimilarity checking.
Intuitively, we expect that increasing the value of $b$ has 
a negative impact on performance, but a positive impact on precision.
We expect that there exists a value for $b$
yielding the best trade-off between both criteria on average.
In contrast, as our approach does only potentially yield false positives but no false negatives,
we do not have to investigate recall measures (see Theorem~\ref{theorem:bzhg-bisim}).

In addition, we expect the presence of non-deterministic behavior
in input models to (negatively) impact performance of timed-bisimilarity checking
as compared to the deterministic case, 
due to the additional effort caused by the composite zone-history graph construction (see Def.~\ref{def:composite-zone-history-graph}).
In contrast, we expect that the presence or absence of non-determinism 
does, in contrast to the value of $b$, not directly impact precision.
To summarize, we consider the following research questions.

\begin{itemize}
  \item \textbf{RQ1 (Efficiency).} How does the value of $b$ 
  as well as the presence/absence of \emph{non-deterministic} behavior 
  impact \emph{computational effort} of timed-bisimilarity checking?

  \item \textbf{RQ2 (Precision).} How does the value of $b$ 
  as well as the presence/absence of \emph{non-deterministic} behavior
  impact \emph{precision} of timed-bisimilarity checking?

  \item \textbf{RQ3 (Trade-off).} Which value for $b$ constitutes, on average,
  the best \emph{efficiency/pre\-cision trade-off} for timed-bisimilarity checking?
\end{itemize}

\paragraph{Methods and Experimental Design.}
For systematically investigating and comparing the impact of 
different values of $b$, we execute the experimental runs with 
ten different instantiations of parameter $b$, namely 0, 1, 2, 3, 4, 5, 10, 20, 25, and 30.
As our baselines, we consider two cases:
\begin{itemize}
  \item $b=0$ (tracking no history information) is supposed to constitute
the most efficient, yet less precise instantiation, whereas
 \item $b=\infty$ (tracking history information of unbounded length) 
guarantees precise results, but presumably causes the highest computational effort.
\end{itemize}
To keep overall runtime of experiments realistic, we enforce a time-out of 30 minutes for 
checking timed bisimilarity, thus potentially leading to no final results 
for particular combinations of subject systems and values of $b$.
In addition, to keep the overall number of experimental results comprehensible, 
we only consider \emph{strong} bisimilarity-checking for scenarios without 
internal behavior and \emph{weak} bisimilarity-checking, otherwise.

\paragraph{Subject Systems.}
We consider five different TA models taken from 
community benchmarks, frequently being used in recent experimental evaluation of TA analysis techniques:
\begin{itemize}
  \item Train-Gate-Controller (TGC)~\cite{Alur1993}: railroad gate controller for a simple level crossing.
  \item Gear Controller (GC)~\cite{Lindahl2001}: component of the control system operating in a modern vehicle.
  \item Collision Avoidance (CA)~\cite{Jensen1996}: protocol for communication among users using an Ethernet-like medium.
  \item Root Contention Protocol (RCP)~\cite{Collomb2001}: IEEE 1394 root contention protocol of the FireWire bus.
  \item Audio/Video Components (AVC)~\cite{Havelund1997}: messaging protocol for communication between AV components.
\end{itemize}

Unfortunately, none of the community benchmarks we found originally includes any
non-determinism or $\tau$-steps.
Hence, in order to also investigate the impact of the presence of 
non-determinism and silent moves in our evaluation, we manually adapted these five models 
by sporadically adding non-deterministic choices as well as $\tau$-steps.
Overall, this results in 10 TA models, of which 5 models are deterministic 
and 5 models include non-determinism and $\tau$-steps.
Table~\ref{table:subject-systems} provides an overview of key properties of the considered models, 
including the number of locations, switches and clocks and the number of (syntactic) 
occurrences of clock resets within switch guards.
\begin{table}[tp]
  \centering
  \caption{Subject Systems}\label{table:subject-systems}
  \small

\begin{adjustbox}{max width=\textwidth}
\begin{tabular}{rccccc}
\toprule
& \textbf{TGC} & \textbf{GC} & \textbf{CA} & \textbf{RCP} & \textbf{AVC} \\	
\midrule
\textbf{\# Locations} & 14 (15) & 23 (24) & 6 (7) & 10 & 18 (19) \\
\textbf{\# Switches} & 18 (20) & 28 (32) & 13 (15) & 26 (28) & 30 (33) \\
\textbf{\# Clocks} & 1 & 1 & 1 & 2 & 1 \\
\textbf{\# Resets} & 6 & 12 & 1 (2) & 9 & 18 (19) \\
\textbf{\# Mutants} & 26 & 34 & 15 & 26 & 32 \\
\textbf{\# Bisimilar Mutants} & 11 & 15 & 9 & 2 & 1 \\
\textbf{\# Internal Transitions} & 0 (1) & 0 (1) & 0 (1) & 0 (1) & 0 (1) \\
\textbf{\# Non-det. Choices} & 0 (1) & 0 (3) & 0 (1) & 0 (1) & 0 (2) \\
\bottomrule
\end{tabular}
\end{adjustbox}
\end{table}
Here, numbers within brackets denote the properties of the adapted non-deterministic variants of the models
for those cases where the respective property differs from the original model.
Based on these original models, we consider two experimental settings
for executing timed-bisimilarity checking.

\begin{enumerate}
  \item We simply \emph{copy} the model and perform timed-bisimilarity checks between
the original model and its one-to-one copy (which should therefore succeed).
  \item We further \emph{mutate} the copied model to obtain a rich 
  corpus of similar, yet slightly differing models and perform bisimilarity checks between
the original model and its mutations (which may either succeed or fail).
\end{enumerate}

For the second setting, we employ an existing framework providing canonical mutation operators for TA~\cite{Aichernig2014}.
In contrast to classical mutation testing which is used for evaluating \emph{effectiveness} of testing techniques or test suites,
equivalent mutants are not problematic in our setting, but even desirable to investigate
efficiency and precision for both negative as well as positive cases.
We therefore selected two operators presumably having the highest probability to 
produce slightly different, yet similar mutants, namely:
\begin{itemize}
  \item operator \emph{invert resets} flips the reset set $R$ of a switch (\ie{} $R$ becomes $C\setminus R$) and
  \item operator \emph{change guards} changes a comparison operator in a guard of a switch (\eg{} $\leq$ becomes $\geq$).
\end{itemize}

We exhaustively applied both operators to all 10 subject systems.
From the resulting overall number of 268 mutants, 76 are equivalent 
(\wrt{} timed bisimilarity) to the original model (see Table~\ref{table:subject-systems}).
Our evaluation comprises an overall number of 2029 runs of \timbrcheck{} 
of which 512 should be (true) positives (including the 5 identical copies) and 1517 should be (true) negatives
in case of optimally precise results.
However, we do not have measurement results for every mutant and every value of $b$ 
due to our maximum time-out of 30 minutes.


\paragraph{Data Collection.}
To answer \textbf{RQ1}, we measure (1) CPU time and (2) memory consumption, aggregated over all mutants
of each subject system.
Concerning (1), we sum up the CPU times required for generating the (bounded) 
zone-history graphs and for subsequent bisimilarity checks.
According to Theorem~\ref{theorem:bzhg-bisim}, the result
of bounded timed-bisimilarity checking for a bound value $b<\infty$
may yield false positives, but no false negatives.
Hence, to answer \textbf{RQ2}, we only have to count the number of false positives.
We executed all experiments on an Intel Core i7-8700k machine with 6x3.7GHz, 4GB RAM and 
Windows 10. Our tool is implemented in Java using AdoptOpenJDK 11.0.6.10.

\paragraph{Results and Discussion.}
The measurement results for \textbf{RQ1} (efficiency) 
are shown in Fig.~\ref{fig:results-rq1}. 
The given values correspond to the sums of CPU times as described above.
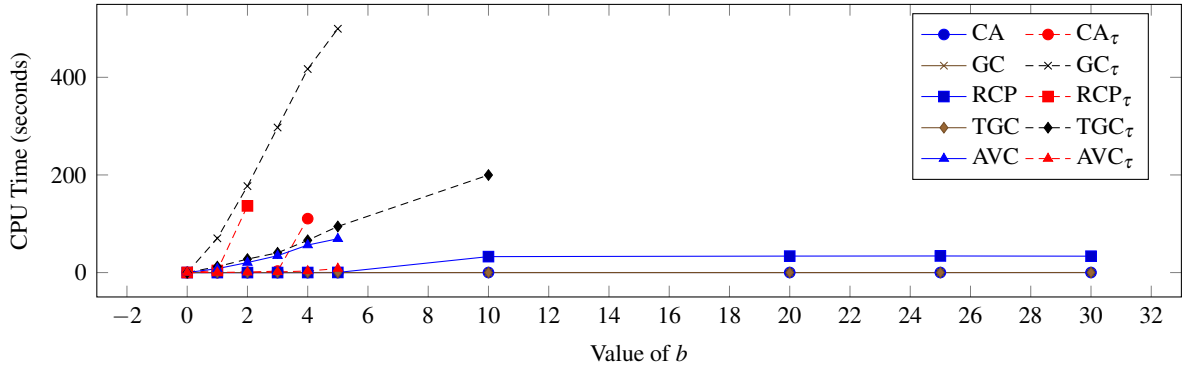
\begin{figure}[tp]
\centering
\begin{tikzpicture}
\begin{axis}
[
width=\linewidth,
height=.25\textheight,
yticklabel style={
	/pgf/number format/fixed,
	/pgf/number format/precision=2
},
xlabel = Value of $b$,
ylabel = CPU Time (seconds),
scaled y ticks=false,
legend cell align={left},
legend pos=north east,
legend columns=2,
]

\addplot+[style=solid, mark=*] coordinates {
(0,0.0330625)
(1,0.0253125)
(2,0.0016875)
(3,0.002375)
(4,0.004125)
(5,0.007)
(10,0.1411875)
(20,0.118625)
(25,0.1205)
(30,0.1130625)
};

\addplot+[style=densely dashed, mark=*, mark options={solid}] coordinates {
(0,0.0314375)
(1,0.1866875)
(2,0.6403125)
(3,3.1093125)
(4,110.4865625)
};

\addplot+[style=solid, mark=x] coordinates {
(0,0.000742857)
(1,0.000485714)
(2,0.001)
(3,0.001085714)
(4,0.001257143)
(5,0.001342857)
(10,0.003228571)
(20,0.004028571)
(25,0.004057143)
(30,0.003942857)
};

\addplot+[style=densely dashed, mark=x, mark options={solid}] coordinates {
(0,0.003314286)
(1,70.20482857)
(2,177.1211143)
(3,297.1722286)
(4,416.9733714)
(5,499.4306286)
};

\addplot+[style=solid, mark=square*] coordinates {
(0,0.005185185)
(1,0.005037037)
(2,0.012074074)
(3,0.037333333)
(4,0.123037037)
(5,0.296444444)
(10,32.70085185)
(20,33.7767037)
(25,34.01866667)
(30,33.59222222)
};

\addplot+[style=densely dashed, mark=square*, mark options={solid}] coordinates {
(0,0.206740741)
(1,4.283592593)
(2,136.5603333)
};

\addplot+[style=solid, mark=diamond*] coordinates {
(0,0.000555556)
(1,0.016518519)
(2,0.000666667)
(3,0.000703704)
(4,0.001296296)
(5,0.001333333)
(10,0.004962963)
(20,0.014962963)
(25,0.041925926)
(30,0.171740741)
};

\addplot+[style=densely dashed, mark=diamond*, mark options={solid}] coordinates {
(0,0.003185185)
(1,12.13718519)
(2,27.63692593)
(3,40.45862963)
(4,66.181)
(5,94.46714815)
(10,199.81468)
};

\addplot+[style=solid, mark=triangle*] coordinates {
(0,0.04730303)
(1,8.093121212)
(2,20.49230303)
(3,34.57163636)
(4,56.40612121)
(5,69.28127273)
};

\addplot+[style=densely dashed, mark=triangle*, mark options={solid}] coordinates {
(0,0.039)
(1,0.219333333)
(2,0.596878788)
(3,1.348818182)
(4,3.23030303)
(5,8.198909091)
};

\legend{CA, CA$_\tau$, GC, GC$_\tau$, RCP, RCP$_\tau$, TGC, TGC$_\tau$, AVC, AVC$_\tau$}
\end{axis}
\end{tikzpicture}
\caption{Measurement Results for \textbf{RQ1}}\label{fig:results-rq1}
\end{figure}
Non-deterministic subject systems with internal behavior are marked with index $\tau$.
As a first observation, the CPU time required for the timed-bisimilarity check
(having a peak value of 74ms, but in most cases performing much faster) 
is neglectable as compared to the CPU time required for the bounded zone-history
graph construction (ranging up to our time-out of 30 minutes).
Hence, we do not consider the CPU times independently but instead 
sum up the CPU times in Fig.~\ref{fig:results-rq1}.

For all deterministic subject systems except for RCP and AVC, the average CPU time is less than 200ms, 
whereas generating the zone-history graphs for RCP takes up to 34 seconds for $b\geq10$.
Furthermore, we already reached the time-out of 30 minutes for AVC for $b=10$.
In contrast, the computational effort for the non-deterministic subject systems heavily 
increases with increasing values of $b$.
As a result, we were only able to check these subject systems for smaller 
values of $b$ (ranging from $b\leq2$ for RCP$_\tau$ to $b\leq10$ for TGC$_\tau$).
This can be explained through the additional computational effort 
for generating \emph{composite} zone-history graphs.
We observe very similar tendencies for the memory consumption, ranging from 40MB to 200MB for 
deterministic systems, and going up to more than 1GB for non-deterministic systems (which we omitted in Fig.~\ref{fig:results-rq1}).

To summarize, \timbrcheck{} performs quite well for deterministic systems, whereas 
the results for non-deterministic systems indicate a worst-case exponential growth of the overall computational effort
(which is, however, inherent to the underlying theoretical problem).

The measurements for \textbf{RQ2} (precision) are shown in Fig.~\ref{fig:results-rq2}.
\begin{figure}[tp]
\centering
\begin{tikzpicture}
\begin{axis}
[
width=\linewidth,
height=.25\textheight,
yticklabel style={
	/pgf/number format/fixed,
	/pgf/number format/precision=2
},
xlabel = Value of $b$,
ylabel = Precision,
scaled y ticks=false,
legend cell align={left},
legend pos=south east,
legend columns=2,
]

\addplot+[style=solid, mark=*] coordinates {
(0,0.666666667)
(1,1)
(2,1)
(3,1)
(4,1)
(5,1)
(10,1)
(20,1)
(25,1)
(30,1)
};

\addplot+[style=densely dashed, mark=*, mark options={solid}] coordinates {
(0,0.625)
(1,0.909090909)
(2,1)
(3,1)
(4,1)
};

\addplot+[style=solid, mark=x] coordinates {
(0,0.457142857)
(1,0.9375)
(2,1)
(3,1)
(4,1)
(5,1)
(10,1)
(20,1)
(25,1)
(30,1)
};

\addplot+[style=densely dashed, mark=x, mark options={solid}] coordinates {
(0,0.371428571)
(1,0.481481481)
(2,0.481481481)
(3,0.481481481)
(4,0.5)
(5,0.541666667)
};

\addplot+[style=solid, mark=square*] coordinates {
(0,0.037037037)
(1,0.04)
(2,0.083333333)
(3,1)
(4,1)
(5,1)
(10,1)
(20,1)
(25,1)
(30,1)
};

\addplot+[style=densely dashed, mark=square*, mark options={solid}] coordinates {
(0,1)
(1,1)
(2,1)
};

\addplot+[style=solid, mark=diamond*] coordinates {
(0,0.333333333)
(1,0.470588235)
(2,0.818181818)
(3,0.818181818)
(4,0.818181818)
(5,0.818181818)
(10,0.818181818)
(20,1)
(25,1)
(30,1)
};

\addplot+[style=densely dashed, mark=diamond*, mark options={solid}] coordinates {
(0,1)
(1,1)
(2,1)
(3,1)
(4,1)
(5,1)
(10,1)
};

\addplot+[style=solid, mark=triangle*] coordinates {
(0,0.071428571)
(1,0.2)
(2,1)
(3,1)
(4,1)
(5,1)
};

\addplot+[style=densely dashed, mark=triangle*, mark options={solid}] coordinates {
(0,1)
(1,1)
(2,1)
(3,1)
(4,1)
(5,1)
};

\legend{CA, CA$_\tau$, GC, GC$_\tau$, RCP, RCP$_\tau$, TGC, TGC$_\tau$, AVC, AVC$_\tau$}
\end{axis}
\end{tikzpicture}
\caption{Measurement Results for \textbf{RQ2}}\label{fig:results-rq2}
\end{figure}
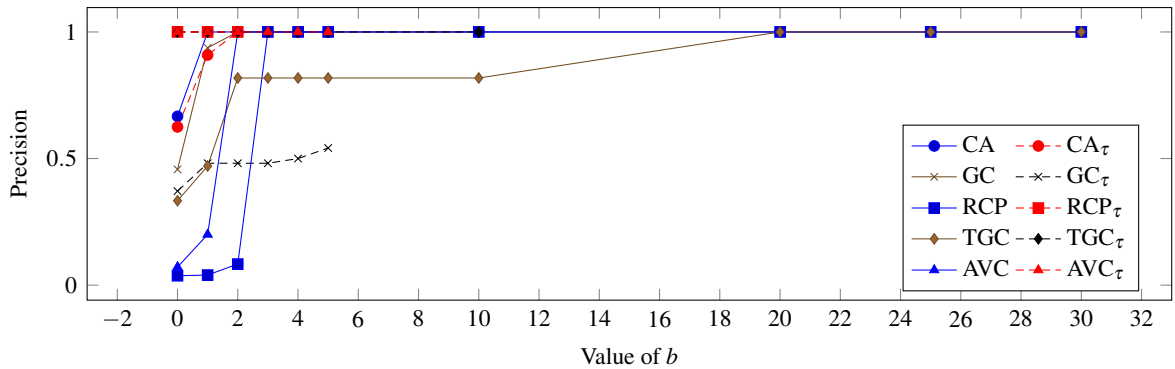
Furthermore, the box-plots in Fig.~\ref{fig:results-rq2-box} illustrate 
statistical distributions of the precision for each value of $b$.
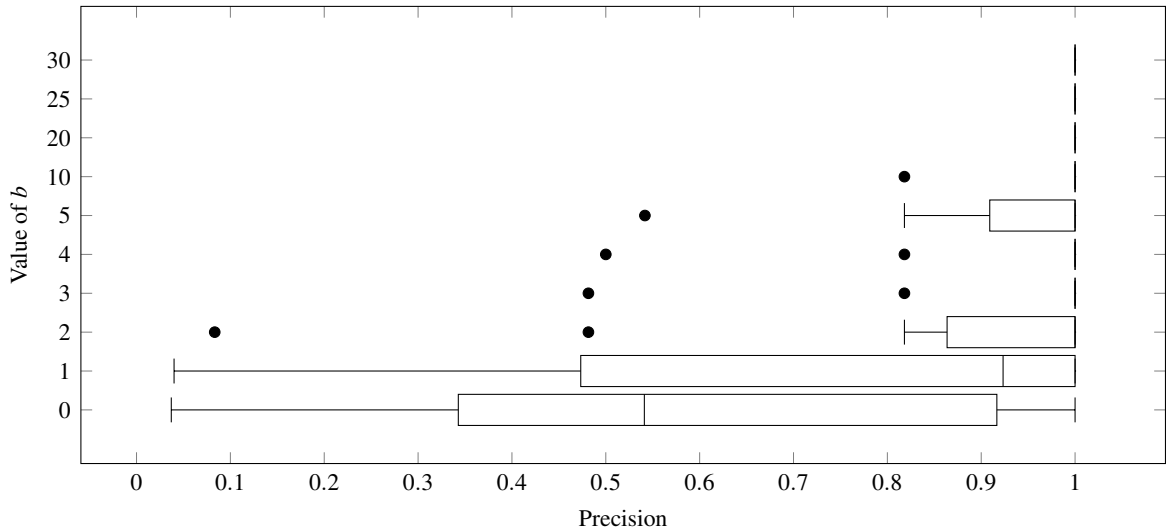
\begin{figure}[tp]
\centering

\begin{tikzpicture}
\begin{axis}[
width=\linewidth,
height=.35\textheight,
xlabel={Precision},
ytick={1,2,3,4,5,6,7,8,9,10},
yticklabels={0,1,2,3,4,5,10,20,25,30},
ylabel={Value of $b$},
]

\addplot+[boxplot, style=solid, mark=*, color=black, mark options={color=black}]
table [row sep=\\,y index=0] {
0.037037037\\ 0.071428571\\ 0.333333333\\ 0.371428571\\ 0.457142857\\ 0.625\\ 0.666666667\\ 1\\ 1\\ 1\\
};

\addplot+[boxplot, style=solid, mark=*, color=black, mark options={color=black}]
table [row sep=\\,y index=0] {
0.04\\ 0.2\\ 0.470588235\\ 0.481481481\\ 0.909090909\\ 0.9375\\ 1\\ 1\\ 1\\ 1\\
};

\addplot+[boxplot, style=solid, mark=*, color=black, mark options={color=black}]
table [row sep=\\,y index=0] {
0.083333333\\ 0.481481481\\ 0.818181818\\ 1\\ 1\\ 1\\ 1\\ 1\\ 1\\ 1\\
};

\addplot+[boxplot, style=solid, mark=*, color=black, mark options={color=black}]
table [row sep=\\,y index=0] {
0.481481481\\ 0.818181818\\ 1\\ 1\\ 1\\ 1\\ 1\\ 1\\ 1\\
};

\addplot+[boxplot, style=solid, mark=*, color=black, mark options={color=black}]
table [row sep=\\,y index=0] {
0.5\\ 0.818181818\\ 1\\ 1\\ 1\\ 1\\ 1\\ 1\\ 1\\
};

\addplot+[boxplot, style=solid, mark=*, color=black, mark options={color=black}]
table [row sep=\\,y index=0] {
0.541666667\\ 0.818181818\\ 1\\ 1\\ 1\\ 1\\ 1\\
};

\addplot+[boxplot, style=solid, mark=*, color=black, mark options={color=black}]
table [row sep=\\,y index=0] {
0.818181818\\ 1\\ 1\\ 1\\ 1\\
};

\addplot+[boxplot, style=solid, mark=*, color=black, mark options={color=black}]
table [row sep=\\,y index=0] {
1\\ 1\\ 1\\ 1\\
};

\addplot+[boxplot, style=solid, mark=*, color=black, mark options={color=black}]
table [row sep=\\,y index=0] {
1\\ 1\\ 1\\ 1\\
};

\addplot+[boxplot, style=solid, mark=*, color=black, mark options={color=black}]
table [row sep=\\,y index=0] {
1\\ 1\\ 1\\ 1\\
};

\end{axis}
\end{tikzpicture}
\caption{Summary of Results for \textbf{RQ2}}\label{fig:results-rq2-box}
\end{figure}
Here, precision ranges from 0 to 1 and denotes the ratio of true positive 
results to the overall number of positive results.
Hence, a higher number of false positives (\ie{} non-bisimilar TA are reported as bisimilar) 
results in lower precision.
Interestingly, the median value for $b=0$ is 0.54, thus showing the essential 
necessity of including zone histories into timed-bisimilarity checks, even in case of 
smaller models with only one clock.
Conversely, we observe that from $b=3$ upwards, the probability of false positives 
drastically decreases with the interquartile range starting at 1 (while having 2 outliers).
For $b\geq10$, we observe no more false positives (except for one outlier for $b=10$).
Note, that the probability for false positives seems to increase for $b=5$.
However, we used a time-out of 30 minutes such that the box plot for $b=5$ actually comprises less subject systems
than in case of smaller values of $b$.
Furthermore, in case of GC$_\tau$, there is no value for $b$ without false positives
for which the timed bisimilarity check terminates before reaching the time-out.
In contrast, as expected, the presence/absence of non-determinism does not have 
a direct impact on precision.

Finally, based on these results, we
can conclude for \textbf{RQ3} (trade-off) that $b=3$ appears
to be a reasonable bound value for efficient, yet sufficiently
precise timed-bisimilarity checking regarding our subject systems.

\paragraph{Threats to Validity.}
We first discuss \emph{internal} threats.
The scope of our experimental setting is limited to the class of safety TA.
However, any non-trivial TA extension~\cite{Waez2013} obstructs essential properties of the underlying zone graphs, 
obviously making our approach more imprecise or even inapplicable.
Concerning the usage of mutation operators to synthetically generate
variations of our subject systems, we rely on small and locally restricted changes as usual.
Nevertheless, our experiments show that those mutations may produce both
TA which are equivalent to the original TA as well as TA which are not, thus
indicating mutation to be an appropriate tool for our experiments.
Finally, to ensure correctness of (a)~our theory and (b)~our tool implementation, we (a)~provide 
correctness proofs and (b)~exhaustively tested our tool on a rich collection of test cases in 
terms of particularly sophisticated pairs of TA fragments 
(which are also available on our accompanying web page\footnote{\url{https://www.es.tu-darmstadt.de/timbrcheck/}}).

We identified as \emph{external} threats (a)~a lack
of comparison to other tools and (b)~the relatively small
set of subject systems.
Concerning (a), there currently exists, to the best of our knowledge,
no competitive tool that provides a functionality being comparable to \timbrcheck{}.
Concerning (b), we selected our set of subject systems from
well-established community benchmarks of reasonable size and complexity which are frequently used
in experiments involving analysis techniques for TA.
However, we plan in a future work
to consider further case studies, especially including real-world systems.

\section{Conclusion}\label{section:conclusion}

We presented a novel formalism, called
bounded zone-history graphs, for precise, yet scalable timed-bisimilarity checking of
non-deterministic TA with silent moves.
Our tool \timbrcheck{} currently supports checking strong bisimilarity 
as well as weak bisimilarity for deterministic and non-deterministic TA provided in the \uppaal{} file format.
Our experimental evaluation shows promising potential in scaling bisimilarity 
checking for deterministic TA also to larger-scaled models without seriously harming precision.
As a future work, we plan to extend our tool and our accompanying experiments to more advanced classes of TA~\cite{Waez2013}.
In addition, we are interested in adapting our technique
to incorporate further crucial notions of behavioral equivalences beyond
timed bisimulation.

\bibliographystyle{eptcs}
\bibliography{sections/references}

\begin{thebibliography}{10}
\providecommand{\bibitemdeclare}[2]{}
\providecommand{\surnamestart}{}
\providecommand{\surnameend}{}
\providecommand{\urlprefix}{Available at }
\providecommand{\url}[1]{\texttt{#1}}
\providecommand{\href}[2]{\texttt{#2}}
\providecommand{\urlalt}[2]{\href{#1}{#2}}
\providecommand{\doi}[1]{doi:\urlalt{http://dx.doi.org/#1}{#1}}
\providecommand{\bibinfo}[2]{#2}

\bibitemdeclare{inproceedings}{Aichernig2014}
\bibitem{Aichernig2014}
\bibinfo{author}{Bernhard~K. \surnamestart Aichernig\surnameend},
  \bibinfo{author}{Klaus \surnamestart H{\"o}rmaier\surnameend} \&
  \bibinfo{author}{Florian \surnamestart Lorber\surnameend}
  (\bibinfo{year}{2014}): \emph{\bibinfo{title}{{Debugging with Timed Automata
  Mutations}}}.
\newblock In: {\sl \bibinfo{booktitle}{SAFECOMP'14}}, {\sl
  \bibinfo{series}{LNCS}} \bibinfo{volume}{8666},
  \bibinfo{publisher}{Springer}, pp. \bibinfo{pages}{49--64},
  \doi{10.1007/978-3-319-10506-2\_4}.

\bibitemdeclare{inproceedings}{Alur1990}
\bibitem{Alur1990}
\bibinfo{author}{Rajeev \surnamestart Alur\surnameend} \&
  \bibinfo{author}{David \surnamestart Dill\surnameend} (\bibinfo{year}{1990}):
  \emph{\bibinfo{title}{{Automata for Modeling Real-Time Systems}}}.
\newblock In: {\sl \bibinfo{booktitle}{ICALP'90}}, {\sl \bibinfo{series}{LNCS}}
  \bibinfo{volume}{443}, \bibinfo{publisher}{Springer}, pp.
  \bibinfo{pages}{322--335}, \doi{10.1007/BFb0032042}.

\bibitemdeclare{inproceedings}{Alur1993}
\bibitem{Alur1993}
\bibinfo{author}{Rajeev \surnamestart Alur\surnameend},
  \bibinfo{author}{Thomas~A. \surnamestart Henzinger\surnameend} \&
  \bibinfo{author}{Moshe~Y. \surnamestart Vardi\surnameend}
  (\bibinfo{year}{1993}): \emph{\bibinfo{title}{{Parametric Real-time
  Reasoning}}}.
\newblock In: {\sl \bibinfo{booktitle}{STOC'93}}, \bibinfo{publisher}{ACM}, pp.
  \bibinfo{pages}{592--601}, \doi{10.1145/167088.167242}.

\bibitemdeclare{techreport}{Andersen2015caal}
\bibitem{Andersen2015caal}
\bibinfo{author}{Jesper~R. \surnamestart Andersen\surnameend},
  \bibinfo{author}{Mathias~M. \surnamestart Hansen\surnameend} \&
  \bibinfo{author}{Nicklas \surnamestart Andersen\surnameend}
  (\bibinfo{year}{2015}): \emph{\bibinfo{title}{\textsc{Caal} 2.0}}.
\newblock \bibinfo{type}{Technical Report}, \bibinfo{institution}{Aalborg
  University, Department of Computer Science}.
\newblock \urlprefix\url{http://caal.cs.aau.dk/docs/CAAL2\_EPG.pdf}.

\bibitemdeclare{inproceedings}{Baldan1999}
\bibitem{Baldan1999}
\bibinfo{author}{Paolo \surnamestart Baldan\surnameend},
  \bibinfo{author}{Andrea \surnamestart Corradini\surnameend} \&
  \bibinfo{author}{Ugo \surnamestart Montanari\surnameend}
  (\bibinfo{year}{1999}): \emph{\bibinfo{title}{{History Preserving
  Bisimulation for Contextual Nets}}}.
\newblock In: {\sl \bibinfo{booktitle}{WADT'99}}, {\sl \bibinfo{series}{LNCS}}
  \bibinfo{volume}{1827}, \bibinfo{publisher}{Springer Berlin Heidelberg}, pp.
  \bibinfo{pages}{291--310}, \doi{10.1007/978-3-540-44616-3\_17}.

\bibitemdeclare{book}{Bellman1957}
\bibitem{Bellman1957}
\bibinfo{author}{Richard \surnamestart Bellman\surnameend}
  (\bibinfo{year}{1957}): \emph{\bibinfo{title}{Dynamic Programming}}.
\newblock \bibinfo{publisher}{Princeton University Press}.

\bibitemdeclare{inproceedings}{Bengtsson2003}
\bibitem{Bengtsson2003}
\bibinfo{author}{Johan \surnamestart Bengtsson\surnameend} \&
  \bibinfo{author}{Wang \surnamestart Yi\surnameend} (\bibinfo{year}{2003}):
  \emph{\bibinfo{title}{{Timed Automata: Semantics, Algorithms and Tools}}}.
\newblock In: {\sl \bibinfo{booktitle}{ACPN'03}}, {\sl \bibinfo{series}{LNCS}}
  \bibinfo{volume}{3098}, \bibinfo{publisher}{Springer}, pp.
  \bibinfo{pages}{87--124}, \doi{10.1007/978-3-540-27755-2\_3}.

\bibitemdeclare{article}{Berard1998}
\bibitem{Berard1998}
\bibinfo{author}{B{\'e}atrice \surnamestart B{\'e}rard\surnameend},
  \bibinfo{author}{Antoine \surnamestart Petit\surnameend},
  \bibinfo{author}{Volker \surnamestart Diekert\surnameend} \&
  \bibinfo{author}{Paul \surnamestart Gastin\surnameend}
  (\bibinfo{year}{1998}): \emph{\bibinfo{title}{{Characterization of the
  Expressive Power of Silent Transitions in Timed Automata}}}.
\newblock {\sl \bibinfo{journal}{Fundamenta Informaticae}}
  \bibinfo{volume}{36}(\bibinfo{number}{2, 3}), pp. \bibinfo{pages}{145--182},
  \doi{10.3233/FI-1998-36233}.

\bibitemdeclare{inproceedings}{Cerans1992}
\bibitem{Cerans1992}
\bibinfo{author}{K{\=a}rlis \surnamestart {\v{C}}er{\=a}ns\surnameend}
  (\bibinfo{year}{1992}): \emph{\bibinfo{title}{{Decidability of Bisimulation
  Equivalences for Parallel Timer Processes}}}.
\newblock In: {\sl \bibinfo{booktitle}{CAV'92}}, {\sl \bibinfo{series}{LNCS}}
  \bibinfo{volume}{663}, \bibinfo{publisher}{Springer}, pp.
  \bibinfo{pages}{302--315}, \doi{10.1007/3-540-56496-9\_24}.

\bibitemdeclare{inproceedings}{Collomb2001}
\bibitem{Collomb2001}
\bibinfo{author}{Aurore \surnamestart Collomb-Annichini\surnameend} \&
  \bibinfo{author}{Mihaela \surnamestart Sighireanu\surnameend}
  (\bibinfo{year}{2001}): \emph{\bibinfo{title}{{Parameterized Reachability
  Analysis of the IEEE 1394 Root Contention Protocol using TReX}}}.

\bibitemdeclare{inproceedings}{Dill1989}
\bibitem{Dill1989}
\bibinfo{author}{David~L \surnamestart Dill\surnameend} (\bibinfo{year}{1989}):
  \emph{\bibinfo{title}{{Timing Assumptions and Verification of Finite-State
  Concurrent Systems}}}.
\newblock In: {\sl \bibinfo{booktitle}{CAV'89}}, {\sl \bibinfo{series}{LNCS}}
  \bibinfo{volume}{407}, \bibinfo{publisher}{Springer}, pp.
  \bibinfo{pages}{197--212}, \doi{10.1007/3-540-52148-8\_17}.

\bibitemdeclare{inproceedings}{Guha2013}
\bibitem{Guha2013}
\bibinfo{author}{Shibashis \surnamestart Guha\surnameend},
  \bibinfo{author}{Shankara~Narayanan \surnamestart Krishna\surnameend},
  \bibinfo{author}{Chinmay \surnamestart Narayan\surnameend} \&
  \bibinfo{author}{S~\surnamestart Arun-Kumar\surnameend}
  (\bibinfo{year}{2013}): \emph{\bibinfo{title}{{A Unifying Approach to Decide
  Relations for Timed Automata and their Game Characterization}}}.
\newblock In: {\sl \bibinfo{booktitle}{EXPRESS/SOS'13}}, {\sl
  \bibinfo{series}{EPTCS}} \bibinfo{volume}{120}, \bibinfo{publisher}{arXiv},
  \doi{10.4204/EPTCS.120.5}.

\bibitemdeclare{article}{Guha2012deciding}
\bibitem{Guha2012deciding}
\bibinfo{author}{Shibashis \surnamestart Guha\surnameend},
  \bibinfo{author}{Chinmay \surnamestart Narayan\surnameend} \&
  \bibinfo{author}{S.~\surnamestart Arun-Kumar\surnameend}
  (\bibinfo{year}{2012}): \emph{\bibinfo{title}{{Deciding Timed Bisimulation
  for Timed Automata Using Zone Valuation Graph}}}.

\bibitemdeclare{inproceedings}{Guha2012prebisimulation}
\bibitem{Guha2012prebisimulation}
\bibinfo{author}{Shibashis \surnamestart Guha\surnameend},
  \bibinfo{author}{Chinmay \surnamestart Narayan\surnameend} \&
  \bibinfo{author}{S.~\surnamestart Arun-Kumar\surnameend}
  (\bibinfo{year}{2012}): \emph{\bibinfo{title}{{On Decidability of
  Prebisimulation for Timed Automata}}}.
\newblock In: {\sl \bibinfo{booktitle}{CAV'12}}, {\sl \bibinfo{series}{LNCS}}
  \bibinfo{volume}{7358}, \bibinfo{publisher}{Springer}, pp.
  \bibinfo{pages}{444--461}, \doi{10.1007/978-3-642-31424-7\_33}.

\bibitemdeclare{inproceedings}{Havelund1997}
\bibitem{Havelund1997}
\bibinfo{author}{Klaus \surnamestart Havelund\surnameend},
  \bibinfo{author}{Arne \surnamestart Skou\surnameend}, \bibinfo{author}{Kim~G.
  \surnamestart Larsen\surnameend} \& \bibinfo{author}{Kristian \surnamestart
  Lund\surnameend} (\bibinfo{year}{1997}): \emph{\bibinfo{title}{{Formal
  Modeling and Analysis of an Audio/Video Protocol: An Industrial Case Study
  Using \textsc{UPPAAL}}}}.
\newblock In: {\sl \bibinfo{booktitle}{RTSS'97}}, pp. \bibinfo{pages}{2--13},
  \doi{10.1109/REAL.1997.641264}.

\bibitemdeclare{inproceedings}{Henzinger1991}
\bibitem{Henzinger1991}
\bibinfo{author}{Thomas~A. \surnamestart Henzinger\surnameend},
  \bibinfo{author}{Zohar \surnamestart Manna\surnameend} \&
  \bibinfo{author}{Amir \surnamestart Pnueli\surnameend}
  (\bibinfo{year}{1991}): \emph{\bibinfo{title}{{Timed Transition Systems}}}.
\newblock In: {\sl \bibinfo{booktitle}{REX'91}}, {\sl \bibinfo{series}{LNCS}}
  \bibinfo{volume}{600}, \bibinfo{publisher}{Springer}, pp.
  \bibinfo{pages}{226--251}, \doi{10.1007/BFb0031995}.

\bibitemdeclare{article}{Henzinger1994}
\bibitem{Henzinger1994}
\bibinfo{author}{Thomas~A. \surnamestart Henzinger\surnameend},
  \bibinfo{author}{Xavier \surnamestart Nicollin\surnameend},
  \bibinfo{author}{Joseph \surnamestart Sifakis\surnameend} \&
  \bibinfo{author}{Sergio \surnamestart Yovine\surnameend}
  (\bibinfo{year}{1994}): \emph{\bibinfo{title}{{Symbolic Model Checking for
  Real-Time Systems}}}.
\newblock {\sl \bibinfo{journal}{Information and Computation}}
  \bibinfo{volume}{111}(\bibinfo{number}{2}), pp. \bibinfo{pages}{193--244},
  \doi{10.1006/inco.1994.1045}.

\bibitemdeclare{manual}{CPLEX2015}
\bibitem{CPLEX2015}
\bibinfo{organization}{IBM Corp.} (\bibinfo{year}{2017}):
  \emph{\bibinfo{title}{IBM ILOG CPLEX Optimization Studio CPLEX User's
  Manual}}.
\newblock
  \bibinfo{note}{\url{https://www.ibm.com/support/knowledgecenter/SSSA5P\_12.8.0/ilog.odms.studio.help/pdf/usrcplex.pdf}}.

\bibitemdeclare{inproceedings}{Jensen1996}
\bibitem{Jensen1996}
\bibinfo{author}{Henrik~E. \surnamestart Jensen\surnameend},
  \bibinfo{author}{Kim~G. \surnamestart Larsen\surnameend} \&
  \bibinfo{author}{Arne \surnamestart Skou\surnameend} (\bibinfo{year}{1996}):
  \emph{\bibinfo{title}{Modelling and analysis of a collision avoidance
  protocol using Spin and Uppaal}}.
\newblock In: {\sl \bibinfo{booktitle}{DIMACS'96}}.

\bibitemdeclare{article}{Larsen1997}
\bibitem{Larsen1997}
\bibinfo{author}{Kim~G. \surnamestart Larsen\surnameend}, \bibinfo{author}{Paul
  \surnamestart Pettersson\surnameend} \& \bibinfo{author}{Wang \surnamestart
  Yi\surnameend} (\bibinfo{year}{1997}): \emph{\bibinfo{title}{\uppaal{} in a
  nutshell}}.
\newblock {\sl \bibinfo{journal}{STTT}}
  \bibinfo{volume}{1}(\bibinfo{number}{1}), pp. \bibinfo{pages}{134--152},
  \doi{10.1007/s100090050010}.

\bibitemdeclare{article}{Lindahl2001}
\bibitem{Lindahl2001}
\bibinfo{author}{Magnus \surnamestart Lindahl\surnameend},
  \bibinfo{author}{Paul \surnamestart Pettersson\surnameend} \&
  \bibinfo{author}{Wang \surnamestart Yi\surnameend} (\bibinfo{year}{2001}):
  \emph{\bibinfo{title}{Formal design and analysis of a gear controller}}.
\newblock {\sl \bibinfo{journal}{{STTT}}}
  \bibinfo{volume}{3}(\bibinfo{number}{3}), pp. \bibinfo{pages}{353--368},
  \doi{10.1007/BFb0054178}.

\bibitemdeclare{inproceedings}{Moller1990}
\bibitem{Moller1990}
\bibinfo{author}{Faron \surnamestart Moller\surnameend} \&
  \bibinfo{author}{Chris \surnamestart Tofts\surnameend}
  (\bibinfo{year}{1990}): \emph{\bibinfo{title}{{A Temporal Calculus of
  Communicating Systems}}}.
\newblock In: {\sl \bibinfo{booktitle}{CONCUR'90}}, {\sl
  \bibinfo{series}{LNCS}} \bibinfo{volume}{458}, \bibinfo{publisher}{Springer},
  pp. \bibinfo{pages}{401--415}, \doi{10.1007/BFb0039073}.

\bibitemdeclare{article}{Nicollin1994}
\bibitem{Nicollin1994}
\bibinfo{author}{Xavier \surnamestart Nicollin\surnameend} \&
  \bibinfo{author}{Joseph \surnamestart Sifakis\surnameend}
  (\bibinfo{year}{1994}): \emph{\bibinfo{title}{{The Algebra of Timed
  Processes, ATP: Theory and Application}}}.
\newblock {\sl \bibinfo{journal}{Information and Computation}}
  \bibinfo{volume}{114}(\bibinfo{number}{1}), pp. \bibinfo{pages}{131--178},
  \doi{10.1006/inco.1994.1083}.

\bibitemdeclare{phdthesis}{Pettersson1999}
\bibitem{Pettersson1999}
\bibinfo{author}{Paul \surnamestart Pettersson\surnameend}
  (\bibinfo{year}{1999}): \emph{\bibinfo{title}{{Modelling and Verification of
  Real-Time Systems Using Timed Automata: Theory and Practice}}}.
\newblock Ph.D. thesis.

\bibitemdeclare{phdthesis}{Rokicki1994}
\bibitem{Rokicki1994}
\bibinfo{author}{Tomas~G. \surnamestart Rokicki\surnameend}
  (\bibinfo{year}{1994}): \emph{\bibinfo{title}{{Representing and Modeling
  Digital Circuits}}}.
\newblock Ph.D. thesis.

\bibitemdeclare{inproceedings}{Tanimoto2004}
\bibitem{Tanimoto2004}
\bibinfo{author}{Tadaaki \surnamestart Tanimoto\surnameend},
  \bibinfo{author}{Suguru \surnamestart Sasaki\surnameend},
  \bibinfo{author}{Akio \surnamestart Nakata\surnameend} \&
  \bibinfo{author}{Teruo \surnamestart Higashino\surnameend}
  (\bibinfo{year}{2004}): \emph{\bibinfo{title}{{A Global Timed Bisimulation
  Preserving Abstraction for Parametric Time-Interval Automata}}}.
\newblock In: {\sl \bibinfo{booktitle}{ATVA'04}}, {\sl \bibinfo{series}{LNCS}}
  \bibinfo{volume}{3299}, \bibinfo{publisher}{Springer}, pp.
  \bibinfo{pages}{179--195}, \doi{10.1007/978-3-540-30476-0\_18}.

\bibitemdeclare{article}{Waez2013}
\bibitem{Waez2013}
\bibinfo{author}{Md~Tawhid~Bin \surnamestart Waez\surnameend},
  \bibinfo{author}{Juergen \surnamestart Dingel\surnameend} \&
  \bibinfo{author}{Karen \surnamestart Rudie\surnameend}
  (\bibinfo{year}{2013}): \emph{\bibinfo{title}{A survey of timed automata for
  the development of real-time systems}}.
\newblock {\sl \bibinfo{journal}{Computer Science Review}} \bibinfo{volume}{9},
  pp. \bibinfo{pages}{1--26}, \doi{10.1016/j.cosrev.2013.05.001}.

\bibitemdeclare{inproceedings}{Weise1997}
\bibitem{Weise1997}
\bibinfo{author}{Carsten \surnamestart Weise\surnameend} \&
  \bibinfo{author}{Dirk \surnamestart Lenzkes\surnameend}
  (\bibinfo{year}{1997}): \emph{\bibinfo{title}{{Efficient Scaling-Invariant
  Checking of Timed Bisimulation}}}.
\newblock In: {\sl \bibinfo{booktitle}{STACS'97}}, {\sl \bibinfo{series}{LNCS}}
  \bibinfo{volume}{1200}, \bibinfo{publisher}{Springer}, pp.
  \bibinfo{pages}{177--188}, \doi{10.1007/BFb0023458}.

\bibitemdeclare{inproceedings}{Yi1990}
\bibitem{Yi1990}
\bibinfo{author}{Wang \surnamestart Yi\surnameend} (\bibinfo{year}{1990}):
  \emph{\bibinfo{title}{{Real-Time Behaviour of Asynchronous Agents}}}.
\newblock In \bibinfo{editor}{J.~C.~M. \surnamestart Baeten\surnameend} \&
  \bibinfo{editor}{J.~W. \surnamestart Klop\surnameend}, editors: {\sl
  \bibinfo{booktitle}{CONCUR'90}}, {\sl \bibinfo{series}{LNCS}}
  \bibinfo{volume}{458}, \bibinfo{publisher}{Springer}, pp.
  \bibinfo{pages}{502--520}, \doi{10.1007/BFb0039080}.

\end{thebibliography}

\end{document}